\setlist[enumerate,1]{leftmargin=20pt}
\setlist[description,1]{leftmargin=15pt}
\newtheorem{theorem}{Theorem}[section]
\newtheorem{corollary}[theorem]{Corollary}
\newtheorem{lemma}[theorem]{Lemma}
\theoremstyle{definition}
\newtheorem{convention}[theorem]{Convention}
\newtheorem{definition}[theorem]{Definition}
\newtheorem{example}[theorem]{Example}
\newtheorem{observation}[theorem]{Observation}
\newtheorem{proviso}[theorem]{Proviso}
\newtheorem{remark}[theorem]{Remark}
\newtheorem{task}[theorem]{Task}
\newcommand{\Active}{\textrm{Active}}
\newcommand{\Done}{\textrm{Done}}
\newcommand{\Else}{\texttt{else }}
\newcommand{\Elseif}{\texttt{elseif }}
\newcommand{\false}{\texttt{false}}
\newcommand{\If}{\texttt{if }}
\newcommand{\ITE}{\texttt{ITE$\,$}}
\newcommand{\IN}{\textrm{Input}}
\newcommand{\initialized}{\textrm{initialized}}
\newcommand{\M}{{\mathbf M}}
\newcommand{\Max}{\texttt{Max}}
\newcommand{\n}{\texttt{n}}
\newcommand{\nil}{\texttt{nil}}
\newcommand{\OUT}{\textrm{Output}}
\newcommand{\pr}{\hspace{.1em}\parallel\hspace{.1em}}
\newcommand{\prl}{\hspace{.5em}\parallel\hspace{.5em}}
\newcommand{\Ret}{\texttt{Ret}}
\renewcommand{\t}{\bar t}
\newcommand{\Then}{\texttt{then }}
\newcommand{\To}{\texttt{To}}
\newcommand{\TOP}{\texttt{Top}}
\newcommand{\true}{\texttt{true}}
\newcommand{\U}{\Upsilon}
\newcommand{\V}{\mathcal V\hspace{1pt}}
\newcommand{\x}{\bar x}
\newcommand\xqed[1]{%
  \leavevmode\unskip\penalty9999 \hbox{}\nobreak\hfill\quad\hbox{#1}}
\newcommand\textqed{\xqed{$\triangleleft$}}
\title{Means-fit effectivity}
\author[Yuri Gurevich]{Yuri Gurevich\\
{\scriptsize University of Michigan}}
\address{Computer Science and Engineering\\
University of Michigan\\
Ann Arbor, MI  48109, U.S.A}
\email{gurevich@umich.edu}
\begin{document}
\thispagestyle{empty}

\begin{abstract}
Historically, the notion of effective algorithm is closely related to the Church-Turing thesis.
But effectivity imposes no restriction on computation time or any other resource; in that sense, it is incompatible with engineering or physics.
We propose a natural generalization of it, \emph{means-fit effectivity}, which is effectivity relative to the (physical or abstract) underlying machinery of the algorithm.
This machinery varies from one class of algorithms to another. Think for example of ruler-and-compass algorithms, arithmetical algorithms, and Blum-Shub-Smale algorithms.
We believe that means-fit effectivity is meaningful and useful independently of the Church-Turing thesis.
Means-fit effectivity is definable, at least in the theory of abstract state machines (ASMs).
The definition elucidates original effectivity as well.
Familiarity with the ASM theory is not assumed. We tried to make the paper self-contained.
\end{abstract}
\maketitle

\section{Introduction}
\label{sec:intro}

How can you prove the Church-Turing thesis? Here is one idea from our favorite logic text:

\begin{quoting}\small
``We get more evidence [for Church's thesis] if we try to define \emph{calculable} directly. For simplicity, consider a unary calculable function $F$. It is reasonable to suppose that the calculation consists of writing expressions on a sheet of paper (or that it can be reduced to this). As will become clear in the next section, there is no loss of
generality in supposing that the expressions written are numbers (more precisely, expressions which designate numbers). We therefore write $a_0, a_1, \dots, a_n$, where $a_0$ is $a$ and $a_n$ is $F(a)$. Now the decision method tells us how to derive $a_i$ from $a_0,\dots,a_{i-1}$ or, equivalently, from $\langle a_0, \dots, a_{i-1}\rangle$. Hence there is a calculable function $G$ such that $G(\langle a_0, \dots, a_{i-1}\rangle) = a_i$. The decision method also tells us when the computation is complete; so there is a calculable predicate $P$ such that $P(\langle a_0, \dots, a_i\rangle)$ is false for $i < n$ and true for $i = n$.

Our attempt to define calculability thus ends in circularity, since $G$ and $P$ must be assumed to be calculable. However, since $G$ describes a single step in the calculation, it must be a very simple calculable function; and the same applies
to $P$. We can therefore expect, on the basis of other evidence for Church's thesis, that $G$ and $P$ will be recursive. If we assume this, we can prove that $F$ is recursive (Shoenfield, \cite[\S6.5]{Shoenfield}). \textqed
\end{quoting}

\noindent
Shoenfield's idea%
\footnote{Actually we don't know whose idea it is; Shoenfield doesn't quote sources in the textbook.}
has been realized. We explain this below. But first let's recall the thesis: Every effective numerical function is (partial) recursive or, equivalently, Turing computable.
Here \emph{numerical functions} are partial functions $y = f(x_1,\dots,x_r)$ where the arguments $x_i$ range over natural numbers and the values $y$, if defined, are natural numbers.
A numerical function is effective if it is computable by an effective algorithm.

The notion of effectivity is famously elusive. In this paper, we propose a more general notion, \emph{means-fit effectivity}. It seems to us useful independently of the Church-Turing thesis, and it elucidates the original notion of effectivity as well.

But let us explain all this in an orderly fashion, starting with an important reservation:
This study is restricted to sequential algorithms, a.k.a.\ classical or traditional, the algorithms of the historical Church-Turing thesis.
A sequential algorithm is deterministic.
Its computations are finite or infinite sequences of steps.
And the computation steps are of bounded complexity. (This last property, observed by Kolmogorov in \cite{Kolmogorov}, rules out massive parallelism.)

Thesis-related literature, including Shoenfield's 1967 logic text \cite{Shoenfield}, was virtually restricted to sequential algorithms during the first few decades after the formulation of the thesis. One may be tempted to analyze all algorithms, but this is virtually impossible because the general notion of algorithm has not matured; it is evolving and the evolution may never stop \cite{G209}.

\begin{proviso}
By default, algorithms are sequential in the rest of this article. \textqed
\end{proviso}

This study is enabled by the axiomatization of sequential algorithms in \cite{G141}, specifically by the representation theorem in \cite[\S6]{G141}, according to which every sequential algorithm is behaviorally identical to a (sequential) abstract state machine (an ASM for short).
Since the only aspect of algorithms that we are interested in is their behavior, we work with ASMs and call them algorithms%
\footnote{The axiomatization of \cite{G141} is somewhat refined in \cite{G201}, but the representation theorem remains valid. Much of the analysis in \cite{G141} and in this study can be generalized to other species of algorithms which have been axiomatized, e.g., to synchronous parallel algorithms \cite{G157} and interactive algorithms \cite{G182}.}.

It is the representation theorem of \cite{G141} that realizes Shoenfield's idea, but the realization does not solve the problem of characterizing effectivity. Why not? Well, Shoenfield assumed that initially we have only input. In fact, we have also some basic operations available \emph{and nothing else}. This ``nothing else" is in essence the missing ingredient.
Adding it to the axiomatization of \cite{G141} allows one to derive the Church-Turing thesis in its core setting \cite{G188}; we touch on this issue in \S\ref{discuss}.

Why did the axiomatization in \cite{G141} allow ineffective algorithms? One reason is that we were interested primarily in engineering applications. While tidy initial states are natural in theoretical study, there may be nothing tidy about the initial states of some engineering algorithms. Those initial states might have been prepared --- and messed up --- by various processes.
Besides, abstracting from resource usage, inherent in the notion of effectivity, is incompatible with engineering.
But there is something else which is important in applications and is also relevant to our current story.

The realm of not-necessarily-effective algorithms is more natural. You use whatever tools are --- in reality or by convenient abstraction --- available to you. Turing's idealized human agent uses pen and paper: ``Computing is normally done by writing certain symbols on paper" \cite[\S9]{Turing}. Ruler and compass don't fit this description but they had been used in antiquity. In real-time engineering applications and in some theoretical computation models, like BSS \cite{BSS}, you work with genuine reals. Hence the interest in tool dependent, or means dependent effectivity.

To formalize the notion of means-fit effectivity, we use the ASM theory. The states of an abstract state machine are first-order structures with static and dynamic basic functions, where the dynamic functions play the role which is normally played by program variables in programming languages. Without loss of generality (as we show in \S\ref{sec:sd}), the static functions and input variables carry all the initial information.

The key idea of this study is a classification, in \S\ref{sec:f}, of static functions into intrinsic and extrinsic.
As far as an algorithm is concerned, all its static functions look like oracles, but the intrinsic functions are built-in functions provided reliably by the underlying machinery of the algorithm or, more abstractly, by the (in general compound) datastructure of the algorithm.
All intrinsic functions are effective relative to the datastructure.
On the other hand, extrinsic functions are provided by  outside entities with no guarantee of reliability in general.
Of course some or all extrinsic functions may be effective as well but such effective functions can be pruned off, as we prove in \S\ref{sec:prune}. (Pruning Theorem~\ref{thm:prune} is our main technical result.)
Accordingly, we define that an algorithm is \emph{means-fit effective} or \emph{effective relative to its datastructure} if it has no extrinsic functions.

What relevance does this have to the Church-Turing thesis? Well, let's consider how the thesis could possibly be falsified. One way is mathematical.
Construct an effective numerical function which is not recursive, as Ackermann (and independently Sudan) constructed  a recursive numerical function which is not primitive
recursive.
Another way may be called physical. Find new means, possibly using new discoveries of physics, which would allow you to effectively compute a numerical function that is not recursive.
To us, the physical version makes little sense. The abstraction from resources is incompatible with physics. In any case, predicting the future of physics is beyond the scope of this paper.

The historical Church-Turing thesis was mathematical, and it is the historical thesis that we discuss here and in \cite{G188}.
One particular datastructure (using our current terminology) was in the center of attention historically.  It was the arithmetic of natural numbers; see for example the quotation from Shoenfield's textbook above. Call algorithms with that datastructure \emph{arithmetical}. We believe that the historical thesis can be formulated thus: if a numerical function is computed by an effective numerical algorithm then it is partial recursive. This form of the thesis has been derived in \cite{G188} from the axioms of \cite{G141} plus an initial-state axiom according to which, initially, the dynamic functions of the algorithm --- with the exception of input variables --- are uninformative. The current paper provides additional justification for this initial-state axiom.
We return to this discussion in \S\ref{discuss} where we discuss other related work as well.

\subsection*{Acknowledgments}

I am extremely grateful to Andreas Blass who provided useful comments on all aspects of this paper, from high-level ideas to low-level details of exposition including definite/indefinite articles absent in my native Russian. (If you know all rules about the articles, explain this: the flu, a cold, influenza.)
I am also very grateful to Udi Boker, Patrick Cegielski, Julien Cervelle, Nachum Dershowitz, Serge Grigorieff and Wolfgang Reisig for most useful comments on short notice.

\section{Abstract state machines}
\label{sec:asm}

To make this paper self-contained and introduce terminology, we recall some basic notions of first-order logic in the form appropriate to our purposes.

\subsection{First-order structures}
\label{sub:fo}

A \emph{vocabulary} is a finite collection of function symbols, each of a fixed arity.
Some function symbols may be marked as \emph{relational} and called \emph{relations}.
Some function symbols may be marked as \emph{static}; the other function names are \emph{dynamic}.

Two vocabularies are \emph{consistent} if any symbol that belongs to both vocabularies has the same arity and markings in both vocabularies.

\begin{convention}[Vocabularies]\mbox{}
\begin{enumerate}
\item Every vocabulary contains the following {\em logic symbols:} the equality sign, nullary symbols $\true$,  $\false$, and $\nil$, the standard Boolean connectives, and a ternary function symbol \ITE\ (read if-then-else)\footnotemark.
\item All logic symbols, except $\nil$ and $\ITE$, are relational.
\item All logic symbols are static.  Nullary dynamic symbols are \emph{elementary variables}. Relational elementary variables are \emph{Boolean variables}. \textqed
\end{enumerate}
\end{convention}

\footnotetext{\ITE\ is a novelty introduced here, though we intended to do that already for a while. It is used below in \S\ref{sec:sd}.}

A \emph{(first-order) structure} $X$ of vocabulary $\U$ is a nonempty set $|X|$, the \emph{base set} of $X$, together with \emph{basic functions} $f_X$ where $f$ ranges over $\U$. (The subscript $X$ will be often omitted.) If $f$ is $r$-ary then $f_X$ is a function, possibly partial, from $|X|^r \to |X|$.

\begin{convention}[Structures]\mbox{}
\begin{enumerate}
\item The \emph{nonlogic vocabulary} of a structure $X$ is the vocabulary of $X$ minus all the logic symbols.
\item By default, basic functions are total. This guideline will save us space. Instead of indicating totality in most cases, we will indicate partiality in just a few cases.
\item In every structure, {\tt true} and {\tt false} and $\nil$ are defined and distinct.
\item Every (defined) value of every basic relation is either $\true$ or $\false$. The equality sign and the standard Boolean operations have their usual meaning.
\item $\ITE(x,y,z)$ is $y$ if $x=\true$, is $z$ if $x=\false$, and is $\nil$ otherwise. \textqed
\end{enumerate}
\end{convention}

\begin{remark}
In constructive mathematics, (constructive) real numbers are represented by algorithms. As a result, the equality of (thus represented) real numbers becomes partial. We took a similar position in \cite{G201}. But one may want to distinguish between genuine reals and their representations and to keep the equality of genuine reals total.
\end{remark}

\noindent
Here $\nil$ is an error value of sorts; the first argument of $\ITE$ is normally Boolean. $\nil$ replaces {\tt undef} of \cite{G141} to emphasize the difference between default/error values and the absence of any value.

Terms of vocabulary $\U$ are defined by induction: If $f$ is an $r$-ary symbol and $t_1,\dots,t_r$ are terms then $f(t_1,\dots,t_r)$ is a term. (Here the case $r=0$ is the basis of induction.) A term $f(\t)$ is \emph{Boolean} if $f$ is relational.

The value of an $\U$ term in a $\U$ structure $X$ is defined by induction:
\[\V_X f(t_1,\dots,t_r) =
 f_X\big(\V_X t_1, \dots, \V_X t_r\big).\]
By default, to evaluate $f(t_1,\dots,t_r)$, evaluate the terms $t_i$ first. But $\ITE(t_1,t_2,t_3)$ is an exception: Evaluate $t_1$ first. f the result is $\true$ then evaluate $t_2$ but not $t_3$, and if the result is $\false$ then evaluate $t_3$ but not $t_2$; otherwise evaluate neither $t_2$ nor $t_3$.

If $X_1, X_2$ are structures of vocabularies $\U_1, \U_2$ respectively, and $t_1, t_2$ are terms of vocabularies $\U_1, \U_2$ respectively, then $\V_{X_1}(t_1) = \V_{X_2}(t_2)$ means that either both sides are defined and have the same value (so that the two structures have common elements) or else neither side is defined.

The rest of this subsection is devoted to introduction of the union of structures. Call two structures are \emph{consistent} if their vocabularies are consistent and the following condition holds for every joint function symbol $f$. If $r$ is the arity of $f$ and elements $x_1,\dots,x_r$ belong to both structures then
$f(x_1,\dots,x_r)$ is the same in both structures.

\begin{definition}[Union]\label{def:union}\mbox{}
Let $X_1,\dots, X_N$ be pairwise consistent structures with the same logic elements.
The \emph{union} $X_1\cup X_1 \cup \dots \cup X_N$ of the structures $X_i$ is the structure $X$ such that
\begin{enumerate}
\item the vocabulary of $X$ is the  union of the vocabularies of $X_1,\dots, X_N$,
\item the base set of $X$ is the union of the base sets  of $X_1,\dots, X_N$, and
\item if $f$ is an $r$-ary basic function of $X_i$ then $\V_X(f(x_1,\dots,x_r))$ is the default value unless all $r$ elements $x_1,\dots,x_r$ belong to $X_i$\,, in which case $\V_X(f(x_1,\dots,x_r)) = \V_{X_i}(f(x_1,\dots,x_r))$ . \textqed
\end{enumerate}
\end{definition}

Notice that different structures $X_i$ may share nonlogic elements, in which case the union is not a disjoint union modulo the logic.

\subsection{Abstract state machines: Definition}
\label{sub:asmdef}

Traditionally, in logic, structures are static, but we will use structures as states of algorithms.
Let $X$ be a structure of vocabulary $\U$.
A {\em location} $\ell$ of $X$ is a pair $(f,\x)$ where $f\in\U$, $f$ is dynamic, $\x\in|X|^r$ and $r$ is the arity of $f$.  The {\em content} of location $\ell$ is $f_X(\x)$.
An {\em (atomic) update} of $X$ is a pair $(\ell,y)$ where $\ell$ is a location $(f,\x)$ and $y\in|X|$.  To execute the update $(\ell,y)$ means to replace the current content of $\ell$ with $y$, that is to set $f_X(\x)$ to $y$. This produces a new structure.
Updates $(\ell_1,y_1)$ and $(\ell_2,y_2)$ are \emph{contradictory} if $\ell_1=\ell_2$ but $y_1\neq y_2$; otherwise the updates are \emph{consistent}.

\begin{definition}[Rules]
\emph{Rules} of vocabulary $\U$ are defined by induction.
\begin{enumerate}
\item An \emph{assignment} has the form\quad
    $f(t_1,\dots,t_r):=t_0$\quad where\\
    $f\in\U$, $f$ is dynamic, $r$ is the arity of $f$, and $t_0,\dots,t_r$ are $\U$ terms.
\item A \emph{conditional rule} has the form\quad
    $\texttt{if $\beta$ then $R_1$ else $R_2$}$\\
    where $\beta$ is a Boolean $\U$ term and $R_1, R_2$ are $\U$ rules.
\item A \emph{parallel rule} has the form\quad
    $R_1 \pr R_2$\quad where $R_1, R_2$ are $\U$ rules. \textqed
\end{enumerate}
\end{definition}

\smallskip\noindent
\textbf{Semantics of rules.} A successful  execution of an $\U$ rule $R$ at an $\U$ structure $X$ produces a pairwise consistent finite set $\{(\ell_1,y_1), \dots, (\ell_n,y_n)\}$ of updates and thus results in a new state $X'$, obtained from $X$ by executing these updates.
An assignment $f(t_1,\dots,t_r):=t_0$ produces a single update $\big(\ell,\V_X(t_0)\big)$ where $\ell = \Big(f,\big(\V_X(t_1),\dots,\V_X(t_r)\big)\Big)$.
A conditional rule $\If \beta\ \Then R_1\ \Else R_2$ produces the update set of $R_1$ if $\beta=\true$ in $X$ and the update set of $R_2$ if $\beta=\false$ .
A parallel rule $R_1\pr R_2$ produces the union of the update sets of $R_1$ and $R_2$.

Notice that the execution of the assignment in a given state $X$ does not require the evaluation of the term $f(t_1,\dots,t_r)$. Let $x_i = \V_X(t_i)$ for $i = 0, \dots, r$. If $f(x_1,\dots,x_r)$ is undefined but $x_0$ is defined than $f(x_1,\dots,x_r) = x_0$ after the assignment.

\begin{definition}
A (sequential) ASM $A$ of vocabulary $\U$ is given by a  \emph{program} and \emph{initial states}. The program is a rule of vocabulary $\U$. Initial states are $\U$ structures. The collection of initial states is nonempty and closed under isomorphisms. \textqed
\end{definition}

A \emph{computation} of $A$ is a finite or infinite sequence $X_0, X_1, X_2, \dots$ of $\U$ structures where $X_0$ is an initial state of $A$ and where every $X_{i+1}$ is obtained by executing $\Pi$ at $X_i$. A \emph{(reachable) state} of $A$ is an $\U$ structure that occurs in some computation of $A$.

As we mentioned in \S\ref{sec:intro}, every (sequential) algorithm $A$ is behaviorally identical to some (sequential) ASM $B$; they have the same initial states and the same state-transition function.

\begin{proviso}
By default, algorithms are abstract state machines in the rest of this article.
\end{proviso}

For future use, we formulate the following obvious observation.

\begin{observation}[Failure]\label{obs:fail}
There are two scenarios that an algorithm $A$ fails at a given state $X$. One is that the algorithm attempts to evaluate a basic partial function $f$ at an input where $f$ is undefined. The other reason is that the program of $A$ produces contradictory updates of some basic function $f$.
\end{observation}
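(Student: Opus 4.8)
The plan is to unfold the definition of a successful execution and then carry out a routine induction, first on terms and then on rules. By the semantics of rules, the execution of the program $\Pi$ of $A$ at a state $X$ is \emph{unsuccessful} --- i.e.\ $A$ fails at $X$ --- precisely when one of the ingredients demanded by the semantics is missing: either some term evaluation called for by the semantics is undefined, or the set of updates that $\Pi$ would produce is not pairwise consistent. (Finiteness of that set is automatic, since every rule is a finite syntactic object and each clause of the semantics produces a finite set.) So it suffices to show that a breakdown of the first kind is always an instance of the first scenario of Observation~\ref{obs:fail}, and a breakdown of the second kind is always an instance of the second scenario.

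The first step is a lemma on term evaluation: \emph{if $\V_X(t)$ is undefined, then in the course of evaluating $t$ some basic partial function gets applied to a tuple of (already computed, hence defined) arguments outside its domain.} This is proved by induction on $t$. If $t = f(t_1,\dots,t_r)$ with $f \neq \ITE$ and some $\V_X(t_i)$ is undefined, apply the induction hypothesis to $t_i$; otherwise $\V_X(t) = f_X(\V_X(t_1),\dots,\V_X(t_r))$, which can be undefined only because the argument tuple lies outside the domain of $f_X$, and the base case $r=0$ is the same statement for a constant (recall $\true$, $\false$, $\nil$ are defined by convention, so an undefined nullary value forces $f$ to be partial). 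If $t = \ITE(t_1,t_2,t_3)$: when $\V_X(t_1)$ is undefined use the induction hypothesis on $t_1$; when $\V_X(t_1)=\true$ reduce to $t_2$ and when $\V_X(t_1)=\false$ reduce to $t_3$; and when $\V_X(t_1)$ is defined but is neither $\true$ nor $\false$, we have $\V_X(t)=\nil$, which is \emph{defined} --- so this subcase contributes no undefined term. That last point is exactly why the $\ITE$-clause of term evaluation does not create a third failure mode.

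Now I would induct on the structure of $\Pi$. For an assignment $f(t_1,\dots,t_r):=t_0$, the semantics evaluates $t_0,\dots,t_r$ and, if all are defined, produces the single update $\big((f,(\V_X(t_1),\dots,\V_X(t_r))),\V_X(t_0)\big)$; a one-element update set is pairwise consistent, so the only possible breakdown is an undefined $\V_X(t_i)$, which the lemma turns into the first scenario. For a conditional $\If \beta\ \Then R_1\ \Else R_2$, note that $\beta$ is by definition a Boolean term, so $\V_X(\beta)$ is $\true$, $\false$, or undefined (a defined value of a basic relation is Boolean, by the conventions on structures); in the first two cases the execution coincides with that of $R_1$, resp.\ $R_2$, and we invoke the induction hypothesis, and in the third case the lemma gives the first scenario. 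For a parallel rule $R_1 \pr R_2$: if the execution of $R_1$ or of $R_2$ is itself unsuccessful we are done by the induction hypothesis; otherwise both produce pairwise consistent update sets whose union is not pairwise consistent, which means there are updates $(\ell,y_1)$ and $(\ell,y_2)$ with $y_1\neq y_2$, one contributed by each side, where $\ell=(f,\x)$ for a dynamic basic function $f$ --- precisely the second scenario.

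There is no real obstacle here; the observation is genuinely obvious. The only points needing a moment of care are (i) taking ``$A$ fails at $X$'' to be exactly the negation of ``successful execution'' as spelled out in the semantics of rules, and (ii) checking the $\ITE$-term corner and the non-Boolean-guard corner --- both disposed of by conventions already in force, namely that an $\ITE$-term always has a value once its first argument does, and that the guard of a conditional rule is required to be a Boolean term.
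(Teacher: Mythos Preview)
Your proof is correct, but note that the paper offers no proof at all: the observation is introduced with the phrase ``we formulate the following obvious observation'' and is left unargued. So there is no approach to compare against; you have simply supplied the routine verification that the paper omits. Your handling of the two corner cases --- the \ITE\ clause in term evaluation (where a non-Boolean first argument yields the defined value $\nil$, not a failure) and the guard of a conditional rule (which is syntactically required to be Boolean, hence its defined values are exhausted by $\true$ and $\false$) --- is exactly what makes the observation genuinely exhaustive, and you have treated both correctly.
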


\section{Separating static and dynamic}
\label{sec:sd}

\subsection{The task of an algorithm}
\label{sub:task}

In logic, traditionally, algorithms compute functions. But in the real world, algorithms perform many other tasks and may be intentionally non-terminating. Here, for example and future reference, is a very simple task where a variable signals some undesirable condition.

\begin{task}\label{task}
Keep watching a Boolean variable $b$. Whenever it becomes true, issue an error message, set $b$ to $\false$, and resume watching it. \textqed
\end{task}

To simplify the exposition, we impose the following proviso.

\begin{proviso}
By default, in the rest of this article, the task of an algorithm is to compute a function.
\end{proviso}

The function computed by an algorithm will be called its \emph{objective function}.

\begin{convention}[Input and output]
If an algorithm $A$ computes an $r$-ary objective function $F$, then it has $r$ \emph{input variables} and one \emph{output variable}. These are elementary variables designated to hold the input and output values of $F$. All of the initial states of $A$ are isomorphic except for the values of the input variables. The initial value of the output variable is the default value.
 \textqed
\end{convention}

\subsection{Making dynamic functions initially uninformative}
\label{sub:uninf}

Every dynamic function $f$ has a default value. The generic default value is $\nil$ but, if $f$ is relational, then the default value of $f$ is $\false$.

\begin{definition}
A dynamic function $f$ of an algorithm $A$ is \emph{uninformative} in a state $X$ of $A$ if all its values in $X$ are the default value of $f$.
Function $f$ is \emph{initially uninformative} (for algorithm $A$) if it is uninformative in every initial state of $A$.
\textqed
\end{definition}

The definition of ASMs allows a dynamic function to differ from the default at infinitely many arguments in an initial state and even to be partial there.
One might reasonably stipulate that, initially, every dynamic function $f$ is (total and) uninformative, unless it is an input variable.
Rather than stipulating, however, we can arrive at this desirable situation by a simple transformation of any given algorithm. If the initial configuration of $f$ is preserved as a static function $s$, then the changes made to $f$ can be tracked by a dynamic function $d$ with the help of a dynamic relation $\delta$ indicating the arguments where $f$ has been updated.

\begin{lemma}\label{lem:sd}
Let $A$ be an algorithm of vocabulary $\U\cup\{f\}$ where $f$ is dynamic, not in $\U$, and different from the input and output variables. Let $s,d,\delta$ be fresh function symbols of the arity of $f$ where $s$ is static, $d$ and $\delta$ are dynamic, and $\delta$ is relational. There is an algorithm $B$ of vocabulary  $\U\cup\{s,d,\delta\}$ satisfying the following requirements.
\begin{enumerate}
\item The initial states of $B$ are obtained from those of $A$ by renaming $f$ to $s$ and introducing uninformative $d, \delta$.
\item Every computation $Y_0,Y_1,\dots, Y_j$ of $B$ is obtained from a computation $X_0,X_1,\dots, X_j$ of $A$ in such a way that the following claims hold where for brevity $X=X_j$ and $Y=Y_j$.
  \begin{enumerate}[leftmargin=10pt]
  \item
  \emph{\ITE}$\big(\delta_Y(\x),d_Y(\x),s_Y(\x)\big)=f_X(\x)$.
  \item Every $g\in\U$ has the same interpretation and is evaluated at exactly the same arguments in $X$ and in $Y$.
  \item For every term $t$ in the program of $A$, we have\ \
  $\V_X(t) = \V_Y(\tilde t)$\ \
   where $\tilde t$ is the result of replacing\footnotemark the subterms $f(t')$ of $t$ with terms \emph{\ITE}$\big(\delta_Y(t'),d_Y(t'),s_Y(t')\big)$.
  \item The arguments where $s_Y$ is evaluated are exactly the arguments where $f_X$ is evaluated and where $f$ has not been updated yet.
  \item $B$ fails at $Y$ if and only if $A$ fails at $X$.
  \end{enumerate}
\item $B$ computes the objective function of $A$. \textqed
\end{enumerate}
\end{lemma}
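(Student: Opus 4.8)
The plan is to construct $B$ explicitly by transforming the program of $A$, and then verify the listed properties by induction on the length of the computation. The construction has two parts: a \emph{term translation} and a \emph{rule translation}. For the term translation, define $\tilde t$ exactly as in item 2(c): recursively replace every outermost subterm of the form $f(t')$ by $\ITE\big(\delta(\tilde{t'}),d(\tilde{t'}),s(\tilde{t'})\big)$, and leave all other function symbols (which are in $\U$) untouched, recursing into their arguments. For the rule translation, I would translate assignments $f(t_1,\dots,t_r):=t_0$ into the parallel composition $\delta(\tilde t_1,\dots,\tilde t_r):=\true \pr d(\tilde t_1,\dots,\tilde t_r):=\tilde t_0$; translate any other assignment $g(\bar t):=t_0$ (with $g\in\U$) into $g(\tilde t_1,\dots,\tilde t_r):=\tilde t_0$; translate conditionals $\texttt{if }\beta\texttt{ then }R_1\texttt{ else }R_2$ into $\texttt{if }\tilde\beta\texttt{ then }\tilde R_1\texttt{ else }\tilde R_2$; and translate $R_1\pr R_2$ into $\tilde R_1\pr\tilde R_2$. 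The initial states of $B$ are obtained from those of $A$ by renaming $f$ to $s$ and adding uninformative $d,\delta$, as required by item 1.

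The core of the argument is the induction establishing item 2, with the conjunction of 2(a)--(e) as the induction hypothesis. The base case $j=0$ is immediate: $\delta$ is uninformative so $\delta_{Y_0}(\x)=\false$ everywhere, hence $\ITE(\delta_{Y_0}(\x),d_{Y_0}(\x),s_{Y_0}(\x))=s_{Y_0}(\x)=f_{X_0}(\x)$ by construction, giving 2(a); the other clauses are vacuous or immediate. For the inductive step, assuming the claims at stage $j$, I would first prove 2(c) for stage $j$ as a sub-induction on the structure of terms $t$: the only nontrivial case is $t=f(t')$, where $\V_Y(\tilde t)=\ITE\big(\V_Y(\delta(\tilde{t'})),\V_Y(d(\tilde{t'})),\V_Y(s(\tilde{t'}))\big)$, and by the sub-induction $\V_Y(\tilde{t'})=\V_X(t')$, so this equals $\ITE\big(\delta_Y(\V_X t'),d_Y(\V_X t'),s_Y(\V_X t')\big)$, which by 2(a) at stage $j$ is exactly $f_X(\V_X t')=\V_X(t)$. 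Here one must also check that $\tilde t'$ is evaluated in $Y$ at the same arguments where $t'$ is evaluated in $X$ (so that definedness matches) --- this uses 2(b) and the lazy evaluation rule for $\ITE$. With 2(c) in hand, the update set produced by $\tilde\Pi$ at $Y_j$ is the image under the translation of the update set produced by $\Pi$ at $X_j$: an update $(f,\x)\mapsto y$ becomes the pair of updates $(\delta,\x)\mapsto\true$ and $(d,\x)\mapsto y$, and an update $(g,\x)\mapsto y$ for $g\in\U$ becomes itself. Executing these yields $Y_{j+1}$, and a direct check recovers 2(a) at stage $j+1$: for an argument $\x$ that was updated, $\delta_{Y_{j+1}}(\x)=\true$ and $d_{Y_{j+1}}(\x)=y=f_{X_{j+1}}(\x)$; for an argument not updated, all three of $\delta,d,s$ are unchanged and 2(a) persists by the hypothesis. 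Clauses 2(b) and 2(d) follow from reading off which locations the translated rule touches and evaluates, and 2(e) follows from Observation~\ref{obs:fail}: a partial-function failure on some $g\in\U$ transfers by 2(b)--(c); a partial-function failure on $f$ in $A$ at argument $\x$ corresponds in $B$ to evaluating $\ITE(\delta(\x),d(\x),s(\x))$ with $\delta(\x)=\false$ (so $s_Y(\x)=f_X(\x)$ is evaluated and undefined, by 2(d)) or $\delta(\x)=\true$ (so $d_Y(\x)$ is evaluated, and by 2(a) it equals the undefined $f_X(\x)$, impossible --- so this subcase cannot arise, consistent with 2(a)); and contradictory updates of $f$ in $A$ become contradictory updates of $d$ in $B$ and conversely, while $\delta$ never receives contradictory updates since its only assigned value is $\true$.

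Finally, item 3 is a corollary of item 2. The output variable of $A$ is an elementary variable in $\U$ (it is not $f$), so by 2(b) it holds the same value in $Y_j$ as in $X_j$ at every stage; likewise the input variables are in $\U$ and by item 1 they carry the same initial values. Hence $B$ halts exactly when $A$ halts (using 2(e) for the failure case and the fact that a halting state is detected by the same program condition, translated), and when it halts the output variable has the value $F$ would assign. Therefore $B$ computes the objective function of $A$.

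The step I expect to be the main obstacle is the bookkeeping in the inductive step that keeps the \emph{definedness} and \emph{evaluation-point} bookkeeping (clauses 2(b) and 2(d)) exactly synchronized with the value bookkeeping (2(a), 2(c)) --- in particular handling the lazy semantics of $\ITE$ correctly so that introducing $\ITE(\delta(t'),d(t'),s(t'))$ in place of $f(t')$ does not cause $B$ to evaluate $s$ (or $d$) at arguments where $A$ never evaluated $f$, and does not cause $B$ to \emph{skip} an evaluation that $A$ performs. This is what forces the precise phrasing of 2(d) ("where $f$ has not been updated yet") and is the delicate point that makes the otherwise-routine induction go through.
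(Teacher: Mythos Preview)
Your proposal is correct and follows essentially the same route as the paper: the same two-stage program translation (replace reads of $f$ by $\ITE(\delta,d,s)$, replace $f$-assignments by $d:=\cdot\pr\delta:=\true$) and the same induction on the computation length, with 2(a) as the driving invariant from which 2(b)--(e) are derived and which is then propagated to stage $j+1$ via the update-set correspondence. The only cosmetic wobble is the phrasing ``assuming the claims at stage $j$, I would first prove 2(c) for stage $j$,'' which sounds circular; what you actually mean (and what the paper does) is that 2(a) at stage $j$ yields 2(c) at stage $j$ by structural induction on terms, and then the update correspondence gives 2(a) at stage $j{+}1$.
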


\footnotetext{We have not specified the order in which replacements are done. It is more efficient to use a bottom-up strategy, so that if $f(t_1)$, $f(t_2)$ are subterms of $t$ and $f(t_1)$ is a subterm of $t_2$ then deal with $f(t_1)$ before $f(t_2)$. But the result does not depend on the order of replacements.}

\begin{proof}
To simplify notation, we assume that $f$ is unary and we write $f(x)$ rather than $f(\x)$. The generalization to the case of arbitrary arity will be obvious.
Let $\Pi$ be the program of $A$.
The desired algorithm $B$ is obtained from $A$ in a simple and effective way.
The initial states of $B$ are defined by requirement~1, and the program $\Sigma$ of $B$ is obtained from $\Pi$ in two stages. Recall how the terms $t$ of $\Pi$ are transformed into terms $\tilde t$ in claim~$2(c)$ of the lemma.
\begin{description}
\item[Stage 1] For every term $f(t)$ in $\Pi$, substitute $\ITE(\delta(t),d(t),s(t))$ for every occurrence of $f(t)$ where $f(t)$ isn't the left side of an assignment.
    Let $\tilde\Pi$ be the resulting program.

\item[Stage 2] Replace every assignment $f(\tilde t) := \tilde \tau$ in $\tilde\Pi$ with parallel rule\\ $d(\tilde t) := \tilde \tau\pr \delta(\tilde t) := \true$.
\end{description}
It remains to prove that $B$ works as intended. Since requirement~1 holds by construction and requirement~3 follows from requirement~2, it suffices to prove requirement~2.

\smallskip\noindent
By induction on $j$, we prove claims $(a)$--$(c)$. Assume that the claims have been proved for all $i<j$. Notice that the induction hypothesis and $(a)$ imply $(b)$ and $(c)$. To prove $(a)$, we consider two cases.

Case 1: $\delta_Y(x)=\true$.
Then there is a positive integer $i<j$ such that, at step $i+1$, $\Sigma$ executes a rule $(d(\tilde t):= \tilde \tau \pr \delta(\tilde t):=\true)$ for some $\tilde t$ and $\tilde\tau$ with $\V_{Y_i}(\tilde t) = x$.  There may be several triples $(i, \tilde t, \tilde t')$ fitting the bill; in such a case, fix such a triple with $i$ as big as possible.

By the construction of $\Sigma$, the rule $(d(\tilde t):= \tilde \tau \pr \delta(\tilde t):=\true)$ replaces an assignment $f(\tilde t):= \tilde \tau$ in $\tilde\Pi$ which, in its turn, replaces an assignment $f(t):=\tau$ in $\Pi$. By the induction hypothesis, $\V_{X_i}(t) = \V_{Y_i}(\tilde t) = x$ and $\V_{X_i}(\tau) = \V_{Y_i}(\tilde \tau)$. By the choice of $i$, we have
\begin{align*}
& \ITE(\delta_Y(x),d_Y(x),s_Y(x)) = d_Y(x) = d_{Y_{i+1}}(x) =
  \V_{Y_{i+1}}(d(\tilde t)) =\\
& \V_{Y_i}(\tilde \tau) = \V_{X_i}(\tau) = \V_{X_{i+1}}(f(t)) =
   f_{X_{i+1}}(x) = f_X(x).
\end{align*}

Case 2:  $\delta_Y(x)=\false$. It suffices to prove that program $\Pi$ does not update $f_X$ at $x$ during the $j$-step computation, because then we have
\[\ITE(\delta_Y(x),d_Y(x),s_Y(x)) = s_Y(x) = s_{Y_0}(x) =
 f_{X_0}(x) = f_X(x).\]
Suppose toward a contradiction that an assignment subprogram $f(t):=\tau$ of $\Pi$ is executed at step $i\le j$. But then a subprogram $(d(\tilde t):= \tilde \tau \pr \delta(\tilde t):=\true)$ of $\Sigma$ is executed at stage $i$. By the induction hypothesis, $\V_{Y_i}(\tilde t) = \V_{X_i}(t) = x$ and therefore $\true = \delta_{Y_{i+1}}(x) = \delta_Y(x)$ which contradicts the case hypothesis.

This concludes the proof of claims~$(a)$--$(c)$. They imply the following auxiliary claim.
\begin{enumerate}
\item[$(c')$] $f$ has been updated at $x$ if and only if $\delta_Y(x)=\true$.
\end{enumerate}

\smallskip\noindent
Claim~$(d)$. Taking claims~$(a)$--$(c')$ into account, we have
\begin{align*}
f(x)\ & \textrm{ is evaluated in $X$ and $f(x)$ has not been updated}\\
\iff  & \textrm{$A$ evaluates $f_X(t)$ for some $t$ in $\Pi$ with $\V_X(t)=x$}\\
      & \textrm{and $f(x)$ has not been updated}\\
\iff  & \textrm{$B$ evaluates $f_X(\tilde t)$ for some $t$ in $\Pi$ with $\V_X(t)=x$}\\
      & \textrm{and $\delta(x)=\false$}\\
\iff  & \textrm{$s(x)$ is evaluated in $Y$}
\end{align*}

\smallskip\noindent
Claim~$(e)$. There are two failure scenarios, and we consider them in turn.

Scenario~1: Evaluating a basic function where it is undefined. For some term $t$ in $\Pi$, we have
\begin{align*}
A \textrm{ fai}&\textrm{ls at }X \\
\iff & \Pi\textrm{ attempts to evaluate undefined }\V_X(t) \\
\iff &\Sigma\textrm{ attempts to evaluate undefined }\V_Y(\tilde t) \\
\iff & B\textrm{ fails at }Y.
\end{align*}

Scenario~2: Producing contradicting updates for some basic function.

Case~1: The basic function in question is an $\U$ function $g$. To simplify notation we assume that $g$ is unary. For some terms $t_1,t_2,\tau_1,\tau_2$ in $\Pi$, we have
\begin{align*}
A \textrm{ fai}&\textrm{ls at }X \\
\iff &\Pi\textrm{ attempts to execute } g(t_1):=\tau_1\textrm{ and }g(t_2):=\tau_2\textrm{ in X},\\
&\textrm{where }\V_X(t_1) = \V_X(t_2)\textrm{ but }\V_X(\tau_1) \neq \V_X(\tau_2), \\
\iff &\Sigma\textrm{ attempts to execute } g(\tilde t_1):= \tilde \tau_1\textrm{ and }g(\tilde t_2):= \tilde \tau_2\textrm{ in X},\\
&\textrm{where }\V_Y(\tilde t_1) = \V_X(t_1) = \V_X(t_2) = \V_Y(\tilde t_2) \\
&\textrm{but }\V_Y(\tilde\tau_1) = V_X(\tau_1)\neq \V_X(\tau_2)= \V_Y(\tilde\tau_2), \\
\iff & B\textrm{ fails at }Y
\end{align*}

Case~2: The basic function in question is not an $\U$ function. Then, it must be $f$ in the case of $A$, and it must be $d$ in the case of $\Sigma$. While $\delta$ is also dynamic, the only value assigned to $\delta$ is $\true$. The rest is as in case~1 except that, in the equivalence chain, $g$ is replaced with $f$ in the second item and with $d$ in the third.
\end{proof}

In the lemma, we exempted the output variable from playing the role of $f$ because the output variable is already initially uninformative. There could be other dynamic functions required to be initially uninformatice. For example, there may be nullary dynamic relation \texttt{halt} which is initially $\false$ and which is set to $\true$ when the computation halts. In \S\ref{sec:prune}, we will introduce numerical dynamic functions initially set to their default value $0$.

\begin{theorem}\label{thm:sd}
Let $f_1,f_2,\dots$ be all of the non-input dynamic functions of an algorithm $A$ which are not required \emph{a priori} to be initially uninformative. There is an algorithm $B$, satisfying the following requirements.
\begin{enumerate}
\item The vocabulary of $B$ is obtained from that of $A$ by replacing dynamic symbols $f_m$ with fresh static symbols $s_m$ of the same arity and by adding some dynamic symbols.
\item The initial states of $B$ are obtained from those of $A$ by renaming every $f_m$ to $s_m$ and making all the new  dynamic functions uninformative, so that all non-input dynamic functions of $A$ are initially uninformative.
\item Every computation $Y_0,Y_1,\dots$ of $B$ is obtained from a computation $X_0,X_1,\dots$ of $A$ in such a way that
\begin{enumerate}
  \item $s_m$ is evaluated at $\x$ in $Y_j$ if and only if $f_m$ is evaluated at $\x$ in $X_j$ and $f_m$ has not been updated at $\x$,
  \item for any static function $g$ of $A$, $g_{Y_j} = g_{X_j}$ and $g$ is evaluated at exactly the same arguments in $Y_j$ and in $X_j$, and
  \item $B$ fails at $Y_j$ if and only if $A$ fails at $X_j$.
\end{enumerate}
\item $B$ computes the objective function of $A$.
\end{enumerate}
\end{theorem}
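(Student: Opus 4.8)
The plan is to obtain $B$ by iterating Lemma~\ref{lem:sd}, once for each $f_m$. A vocabulary is finite, so the list $f_1, f_2, \dots$ is in fact a finite list $f_1, \dots, f_k$. Set $A_0 = A$, and for $m = 1, \dots, k$ let $A_m$ be the algorithm furnished by Lemma~\ref{lem:sd} applied to $A_{m-1}$ with the role of $f$ played by $f_m$ and with fresh symbols $s_m$ (static) and $d_m, \delta_m$ (dynamic); put $B = A_k$. First I would check that each of these applications is legitimate. At stage $m$ the function $f_m$ is still a dynamic function of $A_{m-1}$; it is not a logic symbol (those are static); it is different from the input variables (it is non-input) and from the output variable (which, being already initially uninformative, is not among the $f_m$); and stages $1, \dots, m-1$ have only renamed $f_1, \dots, f_{m-1}$ to fresh static symbols and adjoined the fresh uninformative dynamic symbols $d_i, \delta_i$, so they have not disturbed $f_m$ at all. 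The Convention on input and output is preserved at every stage by Lemma~\ref{lem:sd}(1). Requirements~1 and~2 of the theorem now follow by accumulation: the net effect on the vocabulary is to replace each dynamic $f_m$ by the fresh static $s_m$ and to adjoin the fresh dynamic $d_m, \delta_m$; the initial states are transformed correspondingly, each $d_m$ and $\delta_m$ being introduced uninformative by Lemma~\ref{lem:sd}(1); and since the dynamic functions of $A$ that are required a priori to be initially uninformative are never touched, every non-input dynamic function of $B$ is indeed initially uninformative.

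Next I would compose the computation correspondences. A computation $Y_0, Y_1, \dots$ of $B = A_k$ arises, by Lemma~\ref{lem:sd}(2), from a computation of $A_{k-1}$, which arises from one of $A_{k-2}$, and so on down to a computation $X_0, X_1, \dots$ of $A_0 = A$; write $Y_j^{(m)}$ for the step-$j$ state of the intervening computation of $A_m$, so that $Y_j^{(0)} = X_j$ and $Y_j^{(k)} = Y_j$. For claim~3(b): each static function $g$ of $A$ lies in the ``rest'' vocabulary at every stage, so iterating Lemma~\ref{lem:sd}(2)(b) gives $g_{Y_j} = g_{X_j}$ and equality of the sets of arguments at which $g$ is evaluated. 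For claim~3(c): iterate Lemma~\ref{lem:sd}(2)(e). For claim~3(a): fix $m$; the symbol $s_m$ is introduced at stage $m$ and thereafter sits in the ``rest'' vocabulary at stages $m+1, \dots, k$, so Lemma~\ref{lem:sd}(2)(b) transfers ``$s_m$ is evaluated at $\x$ in $Y_j$'' back to ``$s_m$ is evaluated at $\x$ in $Y_j^{(m)}$''; Lemma~\ref{lem:sd}(2)(d) at stage $m$ rewrites this as ``$f_m$ is evaluated at $\x$ in $Y_j^{(m-1)}$ and $f_m$ has not been updated at $\x$''; and since $f_m$ sits in the ``rest'' vocabulary at stages $1, \dots, m-1$, a further application of Lemma~\ref{lem:sd}(2)(b) --- together with the observation that a stage $i \ne m$ only reroutes $f_i$ and so leaves the evaluation points and the updates of $f_m$ unchanged, the only effect on a term $f_m(t)$ being the replacement of its $f_i$-subterms, which preserves the value by Lemma~\ref{lem:sd}(2)(c) --- transfers this to ``$f_m$ is evaluated at $\x$ in $X_j$ and $f_m$ has not been updated at $\x$''. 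Finally, claim~4 follows by iterating Lemma~\ref{lem:sd}(3), since each $A_m$ retains the input and output variables of $A_{m-1}$ and computes the same objective function.

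The step I expect to be the main obstacle is claim~3(a). It is the one requirement coupling a statement about $s_m$, which exists only from stage $m$ onward, with a statement about $f_m$, which is inert through stage $m-1$ and is ``absorbed'' at stage $m$; so one must route each half of the equivalence through exactly the right block of intermediate stages, and one must verify that the reroutings at indices $i \ne m$ genuinely preserve both the evaluation points and the update behavior of $f_m$. A tidy alternative, which sidesteps the composition at the cost of redoing the induction from the proof of Lemma~\ref{lem:sd} with extra subscripts, is to perform all $k$ reroutings in a single transformation: adjoin $s_m, d_m, \delta_m$ for all $m$ at once; replace every occurrence of a term $f_m(t)$, other than as the left side of an assignment, by \ITE$\big(\delta_m(\tilde t), d_m(\tilde t), s_m(\tilde t)\big)$; replace every assignment $f_m(\tilde t) := \tilde\tau$ by the parallel rule $d_m(\tilde t) := \tilde\tau \pr \delta_m(\tilde t) := \true$; and then prove the analogues of claims $(a)$--$(e)$ for all $m$ simultaneously by a single induction on $j$, exactly as in the proof of the lemma.
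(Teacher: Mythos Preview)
Your proposal is correct and takes the same approach as the paper: iterate Lemma~\ref{lem:sd} once for each $f_m$, introducing fresh $s_m, d_m, \delta_m$ at each stage. The paper's proof is a single sentence to this effect; you have supplied the compositional bookkeeping that the paper leaves implicit, and your optional ``all-at-once'' alternative is also sound.
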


\begin{proof}
Use Lemma~\ref{lem:sd} to replace every $f_m$ with $s_m$ and dynamic bookkeeping functions $d_m$ and $\delta_m$.
\end{proof}

The theorem allows us to impose the following proviso without loss of generality.

\begin{proviso}\label{prv:dynamic}
Below, by default, the non-input dynamic functions of any algorithm are initially uninformative. \textqed
\end{proviso}

\begin{remark}\mbox{}
One generalization of the theorem is related to the task performed by the given algorithm $A$. It does not have to be computing a function. It could be any other reasonable task, e.g.\ Task~\ref{task}. \textqed
\end{remark}

\section{Effectivity: Intuition and tool dependence}
\label{sec:intuit}

Effective algorithms are also known by names like effective procedures and mechanical methods.

\subsection{Intuition}
The notion of effective algorithm has been informal and intuitive. Here is an explanation of it from an influential book on recursive functions and effective computability:

\begin{quoting}
``Several features of the informal notion of algorithm appear to be essential. We describe them in approximate and intuitive terms.
\begin{enumerate}
\item[*1.] An [effective] algorithm is given as a set of instructions of finite size. (Any classical mathematical algorithm, for example, can be described in a finite number of English words.)
\item[*2.] There is a computing agent, usually human, which can react to the instructions and carry out the computations.
\item[*3.] There are facilities for making, storing, and retrieving steps in a computation.
\item[*4.] Let P be a set of instructions as in *1 and L be a computing agent as in *2. Then L reacts to P in such a way that, for any given input, the computation is carried out in a discrete stepwise fashion, without use of continuous methods or analogue devices.
\item[*5.] L reacts to P in such a way that a computation is carried forward deterministically, without resort to random methods or devices, e.g., dice.
\end{enumerate}
Virtually all mathematicians would agree that features *1 to *5, although inexactly stated, are inherent in the idea of algorithm" (Hartley Rogers \cite[\S1.1]{Rogers}). \textqed
\end{quoting}

A numerical function is effective if there is an effective algorithm that computes the function. Everybody agrees that partial recursive functions on natural numbers are effectively computable.

\subsection{Relevance}

There is an important property of effective algorithms that Rogers did not emphasize: There are no restrictions on resources. The computing agent does not run out of time, out of paper, etc. This makes effectivity incompatible with engineering or physics. How is it relevant today?

It often happens in mathematics that it is easier to prove a stronger statement, especially if the stronger statement is cleaner. It is indeed often easier to prove a computational problem ineffective (if we accept the Church-Turing thesis) than to prove that it is not solvable given such and such resources.

There is also, for what it's worth, historical interest in effectivity. But there is something else. Notice that the notion of effectivity readily generalizes to effectivity relative to a given oracle or oracles. There is a good reason for that. Let's have a closer look at item~*2 in the quote above. There is something implicit there which we want to make explicit. The computing agent should be able to ``react to the instructions."
But that ability of the agent depends on the available tools, doesn't it? Working with ruler and compass is different from working with pen and paper. A personal computer may make you more productive than pen and paper.

This leads us to the notion of effectivity relative to the available tools. This more general notion seems to us more interesting and more relevant today.

\section{Means-fit effectivity}
\label{sec:f}

Any static function of an algorithm is essentially an oracle as far as the algorithm is concerned.
Upon invoking/querying a static function on some input, the algorithm waits until, if ever, the oracle provides a reply. If the oracle does not reply then the algorithm is stuck forever.
But not all static functions are equally oracular.

\subsection{Intrinsic and extrinsic}
\label{sub:ie}

In the real world, some static functions are provided by the underlying machinery of the algorithm.
They are part of the normal functionality of the computer system of the algorithm; in that sense they are built-in. These functions are, or at least are supposed to be, provided in a reliable and prompt way. When the algorithm invokes such a built-in function, it gets a value. It may be an error message, if for example the algorithm attempts to divide by 0, but it is a value nevertheless.

Of course, in the real world, things may get more complicated. Much of the functionality of your computer system may be provided via the Internet, and you may lose connection to the Internet. Even if you work offline, your computer system may malfunction.

Here we abstract away from such engineering concerns but we retain the important distinction between the built-in static functions, which we call \emph{intrinsic}, and the other static functions which we call \emph{extrinsic}. To this end, we stipulate that the vocabulary of an algorithm indicates which static symbols are intrinsic. As far as the underlying machinery is concerned, the important part for our purposes is what functions are provided rather than how they are computed. Accordingly, we abstract the underlying machinery of an algorithm to the (in general compound) \emph{datastructure} of the algorithm; see \S\ref{sub:ds} for details.

As far as a given algorithm is concerned, extrinsic functions are provided by the unknown world which is not guaranteed to be prompt or reliable. They are genuine oracles. (In the real world, the extrinsic functions may be also supported by reliable computer systems; we address this in \S\ref{sec:prune}).)
Let us see some examples.

\begin{example}[Ruler and compass]\label{ex:rc}
In the historically important realm of ruler-and-compass algorithms, the underlying machinery includes (unmarked) ruler and compass. These algorithms are not effective in the Church-Turing sense because of the analog, continuous nature of their basic operations, but they had been practical in antiquity. (They also admit limited nondeterminism but they can be made deterministic by means of some simple choice functions.) \textqed
\end{example}

\begin{example}[Idealized human]
An algorithm can be executed by a human being. Viewing an (idealized) human as the underlying machinery of an algorithm is a key idea in Turing's celebrated analysis \cite{Turing}. \textqed
\end{example}

The datastructure of natural numbers $0,1,2,\dots,$ with the standard operations will be called \emph{natural-numbers arithmetic} or simply \emph{arithmetic}. The exact set of standard operations does not matter as long as we have (directly or via programming) $0$ and the successor operation. We will assume here that the standard operations are zero, successor operation and predecessor operation.

\begin{example}[Recursion theory]\label{exl:rec}
Natural-numbers arithmetic is the datastructure of traditional recursion theory, studying partial recursive functions \cite{Rogers}. \textqed
\end{example}

\begin{example}[Turing machines]\label{exl:tm}
Consider Turing machines with a single tape which is one-way infinite and has only finitely many non-blank symbols.
The datastructure of such machines is composed of three finite datatypes --- control states, tape symbols and movement directions --- and one infinite datatype of tape cells. All by itself, the infinite datatype is isomorphic to the natural-numbers arithmetic. \textqed
\end{example}

\begin{example}[Programming language]
A modern programming language may involve a number of datastructures. We see them all as parts of one compound datastructure which is the datastructure of any algorithm written in the programming language. The algorithm may also query some additional functions online. These outside functions would be extrinsic. \textqed
\end{example}

\begin{example}[Random access machines]\label{exl:ram}
Real-world computers are too messy for many theoretical purposes.
This led to an abstract computation model called Random Access Machine, in short RAM \cite{CR}, whose datastructure is more involved than arithmetic but can be encoded in arithmetic. \textqed
\end{example}

\begin{example}[BSS machines]\label{exl:bss}
The datastructure of Blum-Shub-Smale machines, also known as BSS machines  \cite{BSS}, involves natural numbers and genuine reals. BSS machines are able to compute functions over the reals which, for the obvious reason, cannot be computed by Turing machines. But the numerical functions computed by BSS machines are partial recursive. \textqed
\end{example}

\subsection{Datastructures}
\label{sub:ds}

In logic terms, a datastructure is a many-sorted first-order structure. Since our algorithms are abstract state machines, the datastructure of an algorithm always includes the logic sort comprising elements $\true$, $\false$, and $\nil$.
Another sort could be that of natural numbers. In the case of BSS machines, we have also the sort of genuine reals.
In the case of ruler-and-compass algorithms, we have three nonlogic sorts: points, straight lines, and circles of the real plane \cite{Beeson}.

For our purposes, it will be convenient to see datastructures as ordinary structures, with just one base set, where the sorts are represented by static unary relations, a.k.a.\ characteristic functions. If $A$ is an algorithm with datastructure $D$ then any initial state $X$ of $A$ is an extension of $D$ with (i)~dynamic functions which, with the exception of input variables, are uninformative, and (ii)~extrinsic functions, if any.

\subsection{Definition}
\label{sub:fdef}

In a somewhat anticlimactic way, the intrinsic/extrinsic dichotomy allows us to characterize algorithms which are effective relative to their datastructures and are means bound in that sense. We formulate this characterization as a definition.

\begin{definition}[Means-fit effectivity]
\label{def:F}
An algorithm $A$ is \emph{means-fit effective} or \emph{effective relative to its datastructure} if $A$ has no extrinsic functions.
\end{definition}

The relative-to-datastructure character of effectivity is natural. The underlying datastructure matters. The effectivity of arithmetic-based algorithms is different from the  effectivity of ruler-and-compass algorithms and from the effectivity of BSS algorithms.

Definition~\ref{def:F} is especially natural if the extrinsic functions of the algorithm in question are genuine oracles.
But what if some of those oracles are computable, that is, computable by algorithms effective relative to their respective datastructures?
In the real world, new software is rarely written from scratch. Typically, it reuses pieces of software which reuse other pieces of software, and so on. And this is not a strict hierarchy in general; the dependencies between various pieces may be more complicated.

Can we expand a given algorithm $A$ so that it incorporates the auxiliary algorithms behind $A$'s computable extrinsic functions and, for each of these auxiliary algorithms $B$, the algorithms behind $B$'s computable extrinsic functions, etc.?
It turns out that the answer is positive if only finitely many algorithms are involved altogether. The following two sections are devoted to proving this result. After that, we will return to the discussion of effectivity.

\section{Query serialization}
\label{sec:ser}

An algorithm may produce, during one step, many extrinsic queries, that is queries to extrinsic functions. This is problematic for the purpose of the next section. Fortunately query production can be serialized as the following Theorem~\ref{thm:ser} shows. The theorem is of independent interest but, in this paper, it is just an auxiliary result to be used in the following section.

We say that an ASM program $\Pi$ is a \emph{compound conditional composition} or a \emph{compound conditional} if it has the form
\begin{align*}
\If g_1\ &\ \Then P_1\\
\Elseif g_2\ &\ \Then P_2\\
\vdots\\
\Elseif g_n\ &\ \Then P_n
\end{align*}
If each $P_i$ is a parallel composition of assignments, then $\Pi$ is a compound conditional of parallel assignments.

\begin{lemma}\label{lem:ser}
For every ASM $A$ there is a behaviourally equivalent ASM $A'$ such that the program of $A'$ is a compound conditional of parallel assignments and for every state $X$ of $A$ (and thus state of $A'$ as well), $A$ and $A'$ generate exactly the same extrinsic queries at $X$. \textqed
\end{lemma}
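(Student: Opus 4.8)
The plan is to prove the lemma by structural induction on the ASM program $\Pi$ of $A$, converting it into a "normal form" that both exposes each step as a compound conditional of parallel assignments and preserves, state by state, the multiset (or set) of extrinsic queries issued. First I would fix what "generate the same extrinsic queries at $X$" is to mean: a query is a pair $(h,\bar a)$ where $h$ is an extrinsic symbol and $\bar a = \V_X(\bar t)$ for some subterm $h(\bar t)$ that actually gets evaluated when $\Pi$ runs at $X$ — recall that $\ITE$ is lazy, so which subterms are evaluated is itself program-and-state dependent. The target form is
\begin{align*}
\If g_1\ &\ \Then P_1\\
\Elseif g_2\ &\ \Then P_2\\
&\ \vdots\\
\Elseif g_n\ &\ \Then P_n
\end{align*}
with each $P_i$ a parallel composition of assignments, so the work is to flatten nested conditionals and parallel rules into a single top-level case split.

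The key steps, in order. (1) \emph{Normalize terms / push laziness into guards.} Replace every $\ITE(t_1,t_2,t_3)$ occurring in $\Pi$ by introducing the Boolean test $t_1$ into an enclosing conditional, so that after this pass no $\ITE$ remains buried inside a term whose evaluation we cannot see at the rule level; more simply, observe that a subterm is evaluated iff a conjunction of guard-tests (the "path condition") holds, and make those tests explicit. (2) \emph{Distribute \texttt{if} over $\pr$ and flatten nested \texttt{if}:} use the behavioural equivalences $(\If\beta\ \Then R_1\ \Else R_2)\pr S \equiv \If\beta\ \Then (R_1\pr S)\ \Else (R_2\pr S)$ and the associativity of $\pr$, together with merging of nested conditionals via conjunction of guards, to rewrite $\Pi$ as a single compound conditional whose branches are parallel compositions of assignments and (possibly) further conditionals — then iterate until no conditional is nested inside a branch. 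Because $\Pi$ is finite, this terminates, and one gets branches $g_1,\dots,g_n$ whose guards are mutually exclusive-and-exhaustive after the standard "elseif" rewriting (add a final $g_n=\true$ catch-all if needed, with empty update set, i.e.\ an assignment of a variable to itself). (3) \emph{Verify invariance of updates and of evaluated subterms:} each equivalence used above does not change the update set produced at any state $X$ (this is exactly the content of the semantics of rules in \S\ref{sub:asmdef}) and, crucially, does not change which subterms get evaluated — distributing \texttt{if} over $\pr$ keeps the same guard on the same side, and flattening nested \texttt{if}s replaces a test-then-test by a test of the conjunction, evaluating the same atomic Boolean subterms in the same order. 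Hence the set of extrinsic queries at $X$ is preserved at every rewriting step, giving the lemma.

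The main obstacle I expect is \textbf{step (1)/(3) bookkeeping about which subterms are actually evaluated}, because of the non-strict semantics of $\ITE$ (and the fact that an assignment $f(\bar t):=t_0$ does \emph{not} evaluate $f(\bar t)$ itself). Distributing \texttt{if} over $\pr$ is harmless for updates, but one must check it does not \emph{duplicate} an extrinsic query (if $S$ appears in both branches, is $S$'s query counted once or twice?). This is why the statement is safest read with \emph{set} semantics for "the same extrinsic queries"; alternatively one keeps $S$ out of the branches by first hoisting it — $(\If\beta\ \Then R_1\ \Else R_2)\pr S$ can be left as is if we instead flatten from the inside out. A second, minor obstacle is that a guard $g_i$ may itself contain extrinsic subterms, so evaluating $g_i$ to decide the branch is itself a query; one must ensure the rewriting does not introduce guards whose evaluation triggers queries that the original program would not make on that state. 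Careful use of the path-condition description — only conjoining tests that $\Pi$ itself performs along the relevant control path — handles this. I would present the equivalences as a short lemma, then note that finitely many applications bring $\Pi$ to the required form while preserving update sets and evaluated-subterm sets, and conclude.
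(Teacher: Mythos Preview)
Your approach is correct and essentially the same as the paper's: both bring the program into normal form by structural induction, with the parallel case being the crux (the paper merges two compound conditionals $P\pr Q$ directly via a product of their clause lists, which is exactly what your iterated distributive law $(\If\beta\,\Then R_1\,\Else R_2)\pr S \equiv \If\beta\,\Then(R_1\pr S)\,\Else(R_2\pr S)$ unfolds to), and both check case-by-case that the set of evaluated subterms, hence of extrinsic queries, is preserved. Your step~(1) on hoisting $\ITE$ into guards is superfluous for this lemma --- the paper works purely at the rule level and leaves $\ITE$ inside terms, since the rule-level rewriting keeps the same terms evaluated and the laziness of $\ITE$ is thereby preserved automatically.
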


\begin{proof}
It suffices to prove the following claim. Let $\Pi = (P \pr Q)$ where $P,Q$ are compound conditionals of parallel assignments.  There is a compound conditional of parallel assignments $\Pi'$ of the vocabulary of $\Pi$ such that for every state $X$ of $\Pi$ (and of $\Pi'$), we have
\begin{enumerate}
\item[(i)] $\Pi$ and $\Pi'$ generate the same updates at $X$, and
\item[(ii)] $\Pi$ and $\Pi'$ generate the same extrinsic queries at $X$.
\end{enumerate}
We illustrate the proof of the claim on an example where $\Pi$ is

\noindent
\begin{minipage}{.45\textwidth}
\begin{align*}
\If g_1\ &\ \Then P_1\\
\qquad\Elseif g_2\ &\ \Then P_2
\end{align*}
\end{minipage}
\begin{minipage}{.05\textwidth}
 $\prl$
\end{minipage}
\begin{minipage}{.45\textwidth}
\begin{align*}
\If h_1\ &\ \Then Q_1\phantom{mmmmm}
\end{align*}
\end{minipage}

\bigskip\noindent
respectively. The desired $\Pi'$ is
\begin{align*}
\If g_1\land h_1\quad &\ \Then P_1\prl Q_1\\
\Elseif g_1\quad &\ \Then P_1\\
\Elseif g_2\land h_1\quad &\ \Then P_2\prl Q_1\\
\Elseif g_2\quad &\ \Then P_2\\
\Elseif h_1\quad &\ \Then Q_1\\
\end{align*}
All states $X$ of $\Pi$ and $\Pi'$ split into six categories depending on which, if any, of the 5 guards holds in $X$. It is easy to check, for each of the six categories, that $\Pi$ and $\Pi'$ generate the same extrinsic queries. Consider for example, a state $X$ satisfying $(\neg g_1\land g_2)\land h_1$. Both $\Pi$ and $\Pi'$ evaluate the same guards $g_1, g_2$ and $h_1$ and execute the same rule $P_2\pr Q_1$  at $X$, and therefore the requirements (i) and (ii) are satisfied.
\end{proof}

\begin{definition}
An algorithm $A'$ \emph{tightly elaborates} an algorithm $A$ if the vocabulary of $A'$ is that of $A$ plus some auxiliary elementary variables and if the following two conditions are satisfied where the \emph{default expansion} $X'$ of a state $X$ of $A$ is obtained by setting the auxiliary variables to their default values.
\begin{enumerate}
\item Every single step $X,Y$ of $A$ (where $Y$ is the result of executing $A$ at $X$) gives rise to a unique computation $X',\dots, Y'$ of $A'$ called a \emph{mega-step}. The \emph{mega-step} $X',\dots, Y'$ is composed from a bounded number of regular steps of $A'$. The extrinsic queries issued by $A'$ during the mega-step $X',\dots, Y'$ are exactly the extrinsic queries issued by $A$ during the step $X,Y$.
\item Every finite (resp. infinite) computation of $A'$ has the form
    \[\ X'_0,\dots,X'_1,\dots,X'_2,\dots,X'_N
    \quad\big(\textrm{resp. }\
    X'_0,\dots,X'_1,\dots,X'_2,\dots\big) \]
    \[\ \textrm{where}\quad X_0,\ X_1,\ X_2,\dots,X_N
    \quad\big(\textrm{resp. }\
    X_0,\ X_1,\ X_2,\ X_3,\dots\big) \]
    is a finite (resp. infinite) computation of $A$. \textqed
\end{enumerate}
\end{definition}

Notice that $A$ is function-computing if and only if $A'$ is, and then they compute the same objective function.

\begin{theorem}\label{thm:ser}
For any algorithm $A$, there is an algorithm $A'$ which tightly elaborates $A$ and produces at most one extrinsic query per regular step. \textqed
\end{theorem}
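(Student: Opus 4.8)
The plan is to first normalize $A$ by Lemma~\ref{lem:ser}: replace it with a behaviourally equivalent ASM whose program $\Pi$ is a compound conditional of parallel assignments and which issues the same set of extrinsic queries as $A$ at every state. A tight elaboration of an ASM is automatically a tight elaboration of any behaviourally equivalent ASM with the same vocabulary --- the definition of tight elaboration refers only to states, single steps and computations, all of which behavioural equivalence preserves --- so it suffices to build $A'$ for this normalized $A$. The gain from the normalization is that we no longer recurse through nested parallel and conditional rules: a single step of $\Pi$ evaluates the guards $g_1,\dots,g_n$ in order until one holds and then evaluates and fires one parallel block $P_i$ of assignments, so the whole problem reduces to serializing the evaluation of those finitely many guards and of the finitely many terms occurring in the blocks $P_i$.

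To do that I would flatten $\Pi$ into a finite control-flow graph. Introduce a fresh elementary \emph{value} variable $v_p$ for each occurrence $p$ of a subterm in $\Pi$, and a fixed finite collection of fresh Boolean elementary variables encoding a program counter whose states are the nodes of the graph; the program of $A'$ is then a compound conditional keyed on the program counter, and the initial states of $A'$ are the default expansions of the initial states of $A$. The generic node, corresponding to an occurrence $p$ of $f(s_1,\dots,s_r)$ all of whose argument-occurrences have already been processed, executes $v_p := f(v_{p_1},\dots,v_{p_r})$ in parallel with advancing the counter. Since the $v_{p_i}$ are elementary variables, the only genuinely oracular evaluation in such a node is the outermost application of $f$, so this regular step of $A'$ issues exactly one extrinsic query when $f$ is extrinsic and none otherwise. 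Occurrences of \ITE\ and the guard cascade are handled by dedicated branch nodes that merely read an already-computed Boolean value and redirect the counter: for \ITE, process the first argument, then --- depending on the computed value --- process the second argument, or the third, or record $\nil$, and finally copy the result into $v_p$; for the cascade, evaluate $g_i$ only after $g_1,\dots,g_{i-1}$ have turned out false, exactly as in $\Pi$. These branch nodes evaluate only elementary variables and so issue no extrinsic query. Finally, once the terms of the selected block $P_i$ have all been processed, a single terminal node fires $P_i$ with every term replaced by its value variable, in parallel with resetting the program counter and every $v_p$ to its default value; assigning to a (dynamic) left-hand side evaluates nothing, so this step too issues no extrinsic query. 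Hence every regular step of $A'$ produces at most one extrinsic query, and by construction the boundary states of $A'$ are exactly the default expansions of the states of $A$, the computation segment of $A'$ between the default expansion of a state $X$ of $A$ and that of its one-step successor $Y$ is the required mega-step, and the length of that segment is at most the number of subterm-occurrences in $\Pi$ plus $n$ plus a constant --- a bound independent of $X$.

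Correctness is then a routine induction along a mega-step: whenever $\Pi$ evaluates the subterm at an occurrence $p$ during the step at $X$, the variable $v_p$ of $A'$ has by then been set to the value that $\Pi$ assigns to that subterm; consequently $A'$ takes the same branch decisions, visits exactly the same occurrences, queries the same extrinsic functions at the same arguments, and its terminal node reproduces the update set of $\Pi$ at $X$, so $A'$ tightly elaborates $A$ and computes the same objective function. Failure transfers as well: if $\Pi$ evaluates a partial basic function outside its domain, the corresponding term node of $A'$ does the same, and contradictory updates inside a $P_i$ reappear verbatim among the real assignments of the terminal node (the auxiliary resets cannot clash, each touching its own variable once), so both failure scenarios of Observation~\ref{obs:fail} are preserved on the nose. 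I expect the main difficulty to be organizational rather than mathematical: arranging the finite control-flow graph so that the term evaluation of $A'$ touches exactly the occurrences that $\Pi$ touches --- the delicate points being the short-circuiting of \ITE\ and of the guard cascade and the fact that nothing must be re-visited --- together with the discipline of clearing every auxiliary variable at the end of each mega-step so that the ``default expansion'' form demanded by the second condition of tight elaboration holds exactly.
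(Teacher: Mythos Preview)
Your proposal is correct and would prove the theorem, but it takes a different route from the paper's proof.

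The paper also begins by invoking Lemma~\ref{lem:ser}, but then proceeds by \emph{structural induction} on the compound conditional rather than by flattening the whole program into a control-flow graph. In the base case (a parallel block of assignments) the paper lists only the \emph{extrinsic-head} subterms $t_1,\dots,t_n$ in a subterm-before-superterm order, introduces caches $d_1,\dots,d_n$ and Booleans $b_1,\dots,b_n$, and spends one regular step per extrinsic-head term, substituting each cached value back into the remaining terms; the last regular step then fires the residual program $\Pi^n$, which still contains all the intrinsic function symbols but no extrinsic ones. The induction step handles $\If\beta\ \Then P\ \Else Q$ by first serializing the extrinsic-head subterms of $\beta$ in the same way and then dispatching to the already-built $P'$ or $Q'$. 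A separate Boolean $\Done$ marks the end of a mega-step and is reset in an extra trailing step, which is what yields the specific syntactic shape recorded in Corollary~\ref{cor:ser}.

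The trade-off: your construction is more uniform --- one node per subterm occurrence, a single program-counter discipline, no inductive bookkeeping --- but it pays for this with many more auxiliary variables and longer mega-steps (proportional to the number of \emph{all} subterm occurrences, not just the extrinsic-head ones). The paper's construction is leaner and, more importantly, it directly produces the ``pure vs.\ tainted clause'' normal form of Corollary~\ref{cor:ser}, which is exactly what the Pruning Theorem in \S\ref{sec:prune} consumes. Your graph-based $A'$ also satisfies that corollary in spirit (each term node is either pure or has a single extrinsic head), but you would need to add the outer $\If\neg\Done\ \Then\ \Pi'\ \Else\ \Done:=\false$ wrapper rather than folding the reset into the terminal node if you want the literal shape used downstream.
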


\begin{proof}
In virtue of Lemma~\ref{lem:ser}, we may assume without loss of generality that the program $\Pi$ of $A$ is a compound conditional of parallel assignments.

The program of the desired algorithm $A'$ has the form
\[ \If \neg \Done\ \Then \Pi'\ \Else \Done:=\false \]
where $\Pi'$ is the \emph{meaningful part} of the program and $\Done$ is an auxiliary Boolean variable.
We construct $\Pi'$ by induction on $\Pi$.

\smallskip\noindent
\texttt{Basis of induction:} $\Pi$ is a parallel composition of assignments.
Form a list
\[ t_1,\ t_2,\ t_3,\ \dots,\ t_{n-1},\ t_n\]
of all extrinsic-head terms in $\Pi$ such that if $t_j$ is a subterm of $t_k$ then $j<k$.
Let $d_1, d_2, \dots, d_n$ be fresh elementary variables.
The plan is to evaluate $t_1,\dots,t_n$ one by one and store the results in $d_1,\dots,d_n$ respectively, and then to complete the job of the original assignment without any extrinsic calls.
This plan requires us to be careful with substitutions.
Construct a matrix
\begin{align*}
& t_1^0,\ t_2^0,\ t_3^0,\ \dots,\ t_{n-1}^0,\ t_n^0 \\
& d_1,\ t_2^1,\ t_3^1,\ \dots,\ t_{n-1}^1,\ t_n^1 \\
& d_1,\ d_2,\ t_3^2,\ \dots,\ t_{n-1}^2,\ t_n^2 \\
& \vdots\\
& d_1,\ d_2,\ d_3,\ \dots,\ d_{n-1},\ t_n^{n-1}\\
& d_1,\ d_2,\ d_3,\ \dots,\ d_{n-1},\ d_n
\end{align*}
where $t_j^0 = t_j$ and if $i>0$ then $t_j^i = t_j^{i-1}\{d_i\mapsto t_i^{i-1}\}$, so that every instance of $t_i^{i-1}$ in $t_j^i$ is replaced with $d_i$. Similarly, construct programs
\[ \Pi^0,\ \Pi^1,\ \Pi^2,\ \Pi^3,\ \dots,\ \Pi^{n-1},\ \Pi^n\]
where $\Pi^0=\Pi$, and if $i>0$ then $\Pi^i = \Pi^{i-1}\{d_i\mapsto t_i^{i-1}\}$. Notice that $\Pi^n$ has no extrinsic functions.

Let $b_1, b_2, \dots, b_n$ be fresh Boolean variables. Recall that every $b_i$ is initially false and therefore every $\neg b_i$ is initially true. The desired program $\Pi'$ is
\begin{align*}
\If\neg b_1\
&\ \Then\big(d_1 := t_1^0 \prl b_1:=\true\big) \\
\Elseif\neg b_2\
&\ \Then\big(d_2:=t_2^1\prl b_2:=\true\big)\\
\vdots \\
\Elseif\neg b_n\
&\ \Then
\big(d_n :=t_n^{n-1} \prl b_n:=\true\big) \\
\Else\quad &\quad (\Pi^n \prl \Done:=\true\prl\\
           &\quad\ b_1:=\false\prl\dots\prl b_n:=\false)
\end{align*}
This completes the basis of our induction.

\smallskip\noindent
\texttt{Induction step:} $\Pi$ has the form\qquad
$\If \beta\ \Then P\ \Else Q$\\
where $P,Q$ are compound conditionals of parallel assignments.
By the induction hypothesis, there are programs
\begin{align*}
 \If \neg \Done^P\ & \Then P'\ \Else \Done^P:=\false, \\
 \If \neg \Done^Q\ & \Then Q'\ \Else \Done^Q:=\false
\end{align*}
which tightly elaborate $P, Q$ respectively and produce at most one extrinsic query per regular step.

The desired algorithm $A'$ starts with evaluating $\beta$. This part is exactly like the induction basis except that we are given a term $\beta$ rather than an assignment. In particular, now
$t_1, t_2, \dots, t_n$ are all of the extrinsic-head terms in $\beta$, and instead of programs $\Pi^0,\ \Pi^1,\ \Pi^2,\ \Pi^3,\ \dots,\ \Pi^{n-1},\ \Pi^n$ we have terms
\[ \beta^0,\ \beta^1,\ \beta^2,\ \beta^3,\ \dots,\ \beta^{n-1},\ \beta^n\]
where $\beta^0=\beta$ and if $i>0$ then $\beta^i = \beta^{i-1}\{d_i\mapsto t_i^{i-1}\}$.
The desired program $\Pi'$ uses additional auxiliary Boolean variables $a$ and $b$.

\begin{align*}
\If&\ \neg a\land\neg b_1\
\ \Then (d_1 := t_1^0 \prl b_1:=\true) \\
\Elseif&\ \neg a\land\neg b_2\
\ \Then (d_2 := t_2^1 \prl b_2:=\true) \\
&\vdots \\
\Elseif&\ \neg a\land\neg b_n\
\ \Then (d_n := t_n^{n-1} \prl b_n:=\true)\\
\Elseif&\ \neg a\
\ \Then (b:=\beta^n \prl a:=\true\prl \\
&b_1:=\false\prl\dots\prl b_n:=\false)\\
\Elseif&\ a\land b\land(\neg\Done^P)\ \ \Then P' \\
\Elseif&\ a\land\neg b\land(\neg\Done^Q)\ \ \Then Q'\\
\Else&\ (\Done:=\true \prl a:=\false \prl \\
       &\Done^P:=\false \prl \Done^Q:=\false)
\phantom{m}\qedhere
\end{align*}
\end{proof}

\begin{remark}
For those interested in details, let us illustrate two subtleties which complicate the proof of the theorem.
First, consider the program $\Pi'$ of the induction step.
One may be tempted to eliminate the else clause and replace $P',Q'$ with $P'\pr\Done:=\true$ and $Q'\pr\Done:=\true$ respectively. But this does not work because $\Done$ will be set to $\true$ on the very first regular step of $P', Q'$ and thus will prevent them from completing their mega-step.

Second, why did we bother with proving that lemma? It seems that we could consider another case of the induction step where $\Pi = P\pr Q$. The desired $\Pi'$ could be

\vspace{-3em}
\begin{align*}
\If\ \neg\Done^P\ &\ \Then P' \\
\Elseif\ \neg\Done^Q\ &\ \Then Q'\\
\Else\ &\ \Done:=\true
\end{align*}
But it is risky to serialize parallel composition. Just consider the case where $P,Q$ are $a:=b$ and $b:=a$ respectively. \textqed
\end{remark}

For future use we record the following corollary.

\begin{corollary}\label{cor:ser}\rm
\textit{The program of $A'$ has the form}
\[ \If \neg \Done\ \Then \Pi'\ \Else \Done:=\false \]
\textit{where $\Pi'$ has the form}
\begin{align*}
\If g_1 &\ \Then R_1 \\
\Elseif g_2 &\ \Then R_2 \\
&\vdots \\
\Elseif g_n &\ \Then R_n
\end{align*}
\textit{and it is last rule $R_n$ that sets \emph{\Done} to $\true$; the rules $R_1,\dots,R_{n-1}$ do not mention \emph{\Done}.
Furthermore, the $n$ clauses split into two categories:}
\begin{description}
\item[\texttt{Pure}] \textit{clauses with no occurrences of extrinsic functions, and}
\item[\texttt{Tainted}] \textit{clauses of the form}\quad
\texttt{[else]if $g_k$ then $d_k:=t_k\pr Q_k$} \\
\textit{where $d_k$ is an elementary variable, the head function of $t_k$ is extrinsic and there are no other occurrences of extrinsic functions.} \textqed
\end{description}
\end{corollary}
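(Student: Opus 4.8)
The plan is to read the corollary directly off the construction of $A'$ given in the proof of Theorem~\ref{thm:ser}, after one mild normalization: the nested compound conditionals produced by the recursion must be flattened. So I would argue by induction on the structure of the program $\Pi$ of $A$, paralleling exactly the induction in that proof, carrying along two invariants about the meaningful part $\Pi'$: first, that $\Pi'$ is a \emph{flat} compound conditional whose final clause is an unconditional $\Else$ clause (so that $\Pi'$ ``covers all cases''); and second, that every clause of $\Pi'$ is either \texttt{Pure} or \texttt{Tainted} in the stated sense, with the top-level variable $\Done$ occurring only in that final $\Else$ clause, which sets it to $\true$. The outer wrapper $\If \neg\Done\ \Then \Pi'\ \Else \Done:=\false$ appears verbatim in the theorem's construction, so nothing is needed there.

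For the basis, where $\Pi$ is a parallel composition of assignments, the meaningful part exhibited in the proof consists of the clauses $\Elseif \neg b_k\ \Then (d_k:=t_k^{k-1}\pr b_k:=\true)$ for $k=1,\dots,n$, followed by an $\Else$ clause built from $\Pi^n$ together with the resets of the $b_k$ and the assignment $\Done:=\true$. Each of the first $n$ clauses is \texttt{Tainted}: $d_k$ is an elementary variable; the head symbol of $t_k^{k-1}$ is the (extrinsic) head symbol of the extrinsic-head term $t_k$, since replacing the strictly-earlier subterms $t_i^{i-1}$ by $d_i$ changes neither the head nor leaves any other extrinsic occurrence behind; and $Q_k=(b_k:=\true)$ is extrinsic-free. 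The final $\Else$ clause is \texttt{Pure} because $\Pi^n$ contains no extrinsic functions by construction, and it is the only clause mentioning $\Done$. The degenerate case $n=0$ collapses $\Pi'$ to the single \texttt{Pure} clause $\Else (\Pi\pr\Done:=\true)$, which is consistent with both invariants.

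For the induction step, $\Pi=(\If \beta\ \Then P\ \Else Q)$, and by the induction hypothesis the meaningful parts $P'$ and $Q'$ are already flat compound conditionals of \texttt{Pure} and \texttt{Tainted} clauses, each ending in a covering $\Else$ clause that is \texttt{Pure} and mentions only its own auxiliary variable ($\Done^P$, resp.\ $\Done^Q$), never the top-level $\Done$. In the $\Pi'$ displayed in the proof, the clause $\Elseif a\land b\land(\neg\Done^P)\ \Then P'$ is flattened by conjoining its guard $a\land b\land\neg\Done^P$ onto the guard of every clause of $P'$ and splicing the resulting clauses into its place; this is behavior-preserving \emph{precisely because} $P'$ has a covering $\Else$ clause, so no (possibly empty) update set produced by $P'$ is swallowed — this is the one place the covering-$\Else$ invariant is essential. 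Conjoining a guard leaves the body untouched, so \texttt{Pure} clauses stay \texttt{Pure} and \texttt{Tainted} clauses retain their shape $d_k:=t_k\pr Q_k$; and since $P'$ never mentions the top-level $\Done$, neither do the spliced clauses. The clause for $Q'$ is treated identically. The remaining clauses of $\Pi'$ — the $n$ \texttt{Tainted} clauses that evaluate the extrinsic-head terms of $\beta$, and the \texttt{Pure} clause $\Elseif \neg a\ \Then (b:=\beta^n\pr a:=\true\pr b_1:=\false\pr\cdots)$ with $\beta^n$ extrinsic-free — are already in the required form, and the genuine final clause $\Else (\Done:=\true\pr a:=\false\pr\Done^P:=\false\pr\Done^Q:=\false)$ remains last, is \texttt{Pure}, sets $\Done:=\true$, and is the unique clause mentioning $\Done$. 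Both invariants are thereby restored.

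The only step needing genuine (though still purely syntactic) care is this flattening: one must verify that conjoining the outer guard onto the inner guards, in the presence of the covering-$\Else$ property, really is behavior-preserving inside a compound conditional, and one must track that the top-level $\Done$ never leaks into any clause but the last. Everything else is a routine inspection of the construction in Theorem~\ref{thm:ser}.
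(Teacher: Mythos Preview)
Your proposal is correct, and in spirit it matches the paper's intent: the paper gives no proof at all for this corollary, simply recording it ``for future use'' with a \textqed\ marker, on the understanding that it can be read off the construction in Theorem~\ref{thm:ser}. Your inductive argument does exactly that reading.

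Where you go beyond the paper is in making explicit the flattening step. As literally written in the induction step of Theorem~\ref{thm:ser}, the bodies $P'$ and $Q'$ sitting inside the clauses $\Elseif\ a\land b\land(\neg\Done^P)\ \Then P'$ and $\Elseif\ a\land\neg b\land(\neg\Done^Q)\ \Then Q'$ are themselves compound conditionals, so the displayed $\Pi'$ is not flat; the paper silently expects the reader to splice. You carry this out carefully, and your observation that the covering-$\Else$ invariant is what licenses the splice (so that no update set, possibly empty, is lost to a later clause) is a genuine point the paper does not articulate. Your tracking of the two invariants --- flatness with a covering $\Else$, and the Pure/Tainted dichotomy with $\Done$ confined to the last clause --- is exactly the bookkeeping needed, and your verification that the head symbol of $t_k^{k-1}$ remains extrinsic (because the substitutions only touch proper subterms) is the one nontrivial check on the Tainted side. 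In short: same approach, but you supply the normalization the paper leaves implicit.
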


%\begin{remark}\mbox{}
%\begin{enumerate}
%\item Due to serialization, one should be careful, in the proof of the theorem, with the parallel construct. For example, it may seem that the three-line program at the end of the proof may simplified by replacing $P'$ with $(P'\pr a:=\true)$ and removing the second line. But let $P'$ be the program $\Pi'$ of the induction basis. Then $a$ would be set to true too early. The execution of $P'$ would be abandon before it does its job.
%\end{enumerate}
%\end{remark}

\section{Pruning off effective oracles}
\label{sec:prune}

\begin{definition}[Numeric algorithms]\mbox{}
\begin{enumerate}
\item An algorithm (that is an ASM) $A$ is \emph{numeric} if, up to isomorphism, every state of $A$ incorporates natural-numbers arithmetic.
\item Some function symbols in the vocabulary of a numeric algorithm $A$ are marked \emph{numerical}. A basic numerical function takes only nonnegative integer values, and its default value is $0$. \textqed
\end{enumerate}
\end{definition}

In \S\ref{sec:asm}, we defined the notions of consistency and union of structures.
Since we view datastructures as first-order structures (see \S\ref{sec:f}), we have the notions of consistency and union of datastructures.

\begin{theorem}[Pruning Theorem]\label{thm:prune}
Consider $N$ algorithms
$A_0, A_1, \dots, A_{N-1}$
whose datastructures are pairwise consistent,
and suppose that, for every $A_i$, every extrinsic function of $A_i$ is the objective function of some $A_j$. There is a numerical algorithm $B$, with no extrinsic functions, such that
\begin{enumerate}
\item the datastructure of $B$ is the union of the $N$ datastructures of the algorithms $A_i$ plus natural-numbers arithmetic.
\item $B$ computes the objective function of $A_0$. \textqed
\end{enumerate}
\end{theorem}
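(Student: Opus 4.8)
The plan is to build $B$ as a single ASM that step-by-step simulates $A_0$ and, whenever the simulated algorithm issues an extrinsic query $h(\bar v)$ with $h$ the objective function of some $A_j$, suspends the current simulation, launches a simulation of $A_j$ on input $\bar v$, and resumes the caller once the output is produced. An $A_j$ invoked this way may itself issue such queries --- possibly to $A_j$ again --- so the suspended simulations form an unbounded stack; that is exactly why $B$ must be numeric, the stack being indexed by natural numbers. As preparation, by \S\ref{sec:ser} (Theorem~\ref{thm:ser} and Corollary~\ref{cor:ser}) I would replace each $A_i$ by a tight elaboration that produces at most one extrinsic query per step and whose program is a compound conditional with \texttt{Pure} clauses (no extrinsic functions) and \texttt{Tainted} clauses of the form \texttt{[else]if $g_k$ then $d_k := t_k \pr Q_k$}, the head of $t_k$ being the sole extrinsic occurrence; the elaboration keeps the datastructure, the extrinsic functions, and the objective function of $A_i$, so it suffices to simulate the elaboration of $A_0$. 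I also assume Proviso~\ref{prv:dynamic}, so every non-input dynamic function of every $A_i$ is initially uninformative.

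The ASM $B$ is then set up as follows. Its static vocabulary is the union, in the sense of Definition~\ref{def:union}, of the $N$ datastructures together with arithmetic; all these symbols are intrinsic, so $B$ has no extrinsic functions, is numeric, and has exactly the datastructure claimed. Its dynamic vocabulary consists of a ``leveled'' copy $\hat f$, carrying one extra numerical argument, of each non-input dynamic function $f$ of each $A_i$ (including the variables added by the elaboration and the input and output variables), together with control registers $\mathtt{active}$, $\mathtt{top}$ (numerical, initially $0$) and $\mathtt{alg}$, $\mathtt{caller}$, $\mathtt{ret}$ (unary numerical functions of a level, initially $0$), plus $B$'s own input and output variables and an ``initialized'' flag. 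One simulated step of $B$ reads $j = \mathtt{alg}(\mathtt{active})$ and performs one step of the elaboration of $A_j$ on the level-$\mathtt{active}$ leveled state: a \texttt{Pure} clause is executed directly in leveled form; a \texttt{Tainted} clause with $t_k = h(t'_1,\dots,t'_m)$ and $h$ the objective function of $A_{j'}$ triggers a \emph{call} instead --- $B$ evaluates the extrinsic-free $t'_\alpha$ on the current leveled state, applies the updates of $Q_k$ now (they are extrinsic-free, hence correct), sets $m' = \mathtt{top}+1$, advances $\mathtt{top}$ to $m'$, records $\mathtt{alg}(m') = j'$, $\mathtt{caller}(m') = \mathtt{active}$ and a code for this clause in $\mathtt{ret}(m')$, writes the values of the $t'_\alpha$ into the leveled input variables of $A_{j'}$ at level $m'$, and sets $\mathtt{active} = m'$. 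Since $m'$ has never been used, the level-$m'$ leveled state is automatically uninformative, so nothing need be cleared --- this is why $\mathtt{top}$ is only ever incremented. When the algorithm at the active level signals completion, $B$ \emph{returns}: with $\ell = \mathtt{active}$ and $p = \mathtt{caller}(\ell)$ it reads the leveled output of $A_{\mathtt{alg}(\ell)}$ at level $\ell$, uses $\mathtt{ret}(\ell)$ --- a finite case split over all \texttt{Tainted} clauses of all the $A_i$ --- to identify the target variable $d_k$, writes that output into $\widehat{d_k}$ at level $p$, and sets $\mathtt{active} = p$; when this happens with $\ell = 0$, $B$ copies the output to its own output variable and halts. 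A leading, ``initialized''-guarded clause copies $B$'s input variables into the leveled input variables of $A_0$ at level $0$ on the first step. Every clause of $B$ is a bounded parallel composition of assignments guarded by a comparison with a fixed constant, so $B$ is a legitimate ASM.

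What remains, and where the work lies, is to prove that $B$ computes the objective function of $A_0$. By induction on the length of $B$'s computation one maintains the invariant that at every $B$-state the chain $\mathtt{active}, \mathtt{caller}(\mathtt{active}),\dots,0$ carries the stack of suspended simulations: for each level $\ell$ in the chain the level-$\ell$ leveled state equals the state of $A_{\mathtt{alg}(\ell)}$ at the corresponding point of its computation on the recorded input, with its pending extrinsic query (if any) being precisely the one served by the sub-chain above it; and every level outside the chain that has not yet been retired has an uninformative leveled state. The call/return split is arranged so that the atomic one-step effect of a \texttt{Tainted} clause --- setting $d_k$ and applying $Q_k$ at once --- is reproduced across a suspension of arbitrary length by applying $Q_k$ at call time and $\widehat{d_k}$ at return time, the caller's leveled state staying frozen in between because distinct levels are disjoint workspaces. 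Granting the invariant, $B$'s top-level simulation runs exactly the computation of $A_0$ in which each extrinsic query is answered by the true value of the relevant objective function, $B$ halts iff that computation does (the case of $A_0$ being stuck forever on a non-terminating sub-computation becoming a non-terminating $B$-computation), and $B$'s output equals that of $A_0$. I expect two things to be the main obstacle. First, organizing an unbounded recursion inside a machine all of whose steps have bounded complexity: the fresh-level discipline and the call/return splitting are what make this possible, and the invariant must be stated sharply enough to license both. Second --- and this is where the hypothesis of pairwise consistent datastructures is used --- one must check that passing to the union does not perturb any component, i.e.\ that by Definition~\ref{def:union} each $D_i$ sits inside the union with its basic functions preserved on $|D_i|$ and that a value handed from $A_{j'}$ back to $A_i$ as the result of an objective-function computation is interpreted consistently by the union, so that the level-$\ell$ simulation inside $B$ behaves just as $A_{\mathtt{alg}(\ell)}$ would in isolation with its oracles answered truthfully.
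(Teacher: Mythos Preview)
Your proposal is correct and follows essentially the same approach as the paper: preprocess each $A_i$ via Theorem~\ref{thm:ser}/Corollary~\ref{cor:ser}, add a numerical ``level'' (the paper calls it a session number \texttt{n}) as an extra first argument to every dynamic function, maintain a call stack over the natural numbers, and---crucially---never reuse a level, only incrementing the high-water mark so that every freshly allocated level is automatically uninformative and no garbage collection is needed.

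The only notable implementation difference is where you apply $Q_k$. The paper uses a Boolean $b_{ik}$ so that the tainted clause fires twice: on the first firing it performs the call bookkeeping, and on the second (after return) it executes $d_k := \texttt{To}(\texttt{n}) \pr Q_k'$ together, reproducing the original atomic step verbatim. You instead apply $Q_k$ at call time and only the single assignment to $\widehat{d_k}$ at return time. Both are sound because the caller's level is frozen between call and return, so splitting the parallel block across the suspension does not change the state the caller sees when it resumes; your version just requires the extra observation (which you make) that $Q_k$ is extrinsic-free and that its reads are evaluated in the pre-call state. Your sketch of the invariant and your explicit remark about the role of pairwise consistency via Definition~\ref{def:union} are, if anything, more careful than the paper's proof, which presents the construction and leaves the verification implicit.
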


\begin{remark}\mbox{}
\begin{enumerate}
\item If the algorithms $A_i$ are numerical and have the same datastructure $D$, then $D$ is also the datastructure of $B$.
\item The algorithm $B$, constructed in the proof, works exactly like $A_0$ except that, instead of waiting for extrinsic queries to be answered, $B$ computes the answers. It is therefore no wonder that $B$ executes the task of $A_0$ which happens to be computing the objective function. But it could be virtually any other reasonable task, e.g.\ Task~\ref{task}. \textqed
\end{enumerate}
\end{remark}

\begin{proof}[Proof of theorem]
Without loss of generality, different algorithms $A_i$ compute different objective functions. Indeed, if $m<n$ and $A_m, A_n$ compute the same objective function, remove $A_n$. The remaining set is still closed in the sense that, for every $A_i$, every extrinsic function of $A_i$ is computed by some $A_j$.

In essence, we have a mutually recursive system of algorithms.
Whenever one algorithm $A_i$ poses a query to an extrinsic function powered by algorithm $A_j$ (which may be $A_i$ itself), it implicitly calls $A_j$. The caller suspends its execution and transfers the execution control to the callee.
If and when the execution control returns from the callee, the caller resumes executing from the point where it put itself on hold, and it needs the exact same data it was working on except that the query is answered.

Following the standard practice, we use a call stack to implement recursion. To this end, we introduce a numerical variable $\top$ (read ``top," not ``true"), indicating the current position of the top of the stack, and a unary dynamic function $\Active$ such that $\Active(\top)$ indicates the index $i$ of the currently active algorithm.

However we have two non-standard difficulties which make our work harder. One of them is that an algorithm may generate many extrinsic queries in one step. That difficulty was addressed in the previous section.
To illustrate the other difficulty, consider this scenario.  $A_1$ calls $A_2$ and increases $\top$, say from $7$ to $8$. When $A_2$ is done and $\top$ is decreased to $7$, it would be useful to defaultify $A_2$, that is to make all its dynamic functions uninformative.
But defaultification, a form of garbage collection, is a problem in our abstract setting. Fortunately --- and ironically --- effectivity does not have to be efficient. Instead of cleaning up a used copy of $A_2$, we abandon it and use instead another, fresh copy of $A_2$. To this end we need an ample supply of copies of every $A_i$. This will be achieved by means of a special parameter $\n$.

In accordance with Corollary~\ref{cor:ser}, we may assume that the program of algorithm $A_i$ has the form
\[ \If \neg \Done_i\ \Then \Pi_i\ \Else \Done_i:=\false \]
where the meaningful part $\Pi_i$ has the form
\begin{align*}
\If g_{i1} &\ \Then R_{i1} \\
\Elseif g_{i2} &\ \Then R_{i2} \\
&\vdots \\
\Elseif g_{in_i} &\ \Then R_{in_i}\\
\end{align*}
described in Corollary~\ref{cor:ser}. In particular, tainted clauses have the form
\begin{equation}\label{eq:taint}
\texttt{[else]if $g_{ik}$ then $d_{ik}:=t_{ik} \pr Q_{ik}$}
\end{equation}

To construct the program of $B$, we modify programs $\Pi_i$ in three stages.

\smallskip\noindent
\texttt{Stage~1.} $\Pi_i'$ is obtained from $\Pi_i$ by expanding every dynamic function $d(x_1,\dots,x_r)$ to $d(\n,x_1,\dots,x_r)$, so that $d$ acquires a new argument position, ahead of all the old argument positions, and this new position is occupied by a numerical variable $\n$.
The expansion transforms any term $t$ to a term $t'$ and any rule $R$ to a rule $R'$.
In particular, the tainted clauses of $\Pi_i$ acquire this form:
\begin{equation}\label{eq:prime}
\texttt{[else]if $g_{ik}'$ then
$d_{ik}(\n):=t_{ik}' \pr Q_k'$}
\end{equation}
As a result we have infinite supply of copies of $\Pi_i$.

Notice that the same parameter $\n$ is used for all programs $\Pi_i'$. We will have one consecutive numbering of copies of all these programs. We will speak about sessions. Each value of $\n$ corresponds to one session, and a particular $A_i$ executes during that session. Each extrinsic call will create a new session. A special numerical variable $\Max$ will keep track of the maximal session number so far.
For example, suppose that $A_1$ executes session $10$, $\top=7$, $\Max=20$, and $A_1$ calls $A_2$. This will start session $21$, increment $\top$ to $8$ and set $\Max$ to $21$.

The input and output variables of $A_i$ will be denoted $\IN_{i1}$, $\IN_{i2}$, \dots, $\OUT_i$.
The input and output variables of the desired algorithm $B$ will be denoted $\IN_1$, $\IN_2$, \dots, $\OUT$.

\smallskip\noindent
\texttt{Stage~2.} $\Pi_i''$ is obtained from $\Pi_i'$ by modifying every tainted clause \eqref{eq:prime}.
If $t_{ik}' = e(\tau_1',\dots,\tau_r')$ and $e$ is computed by $A_j$, then replace the assignment $d_{ik}(\n):=t_{ik}'$ with the following administrative program which we display first and then explain.
\begin{align*}
&\If\ \neg b_{ik}\ \Then\ \top:=\top+1\prl\Active(\top+1):=j\prl\\
&\qquad\n:=\Max+1\prl\Max:=\Max+1\prl \TOP(\Max+1):=\top\prl\\
&\qquad \IN_{j1}(\Max+1):=\tau_1'\prl \dots \prl\IN_{jr}(\Max+1):=\tau_r' \prl\\
&\qquad \Ret(\Max+1):=\n \prl b_{ik}:=\true\\
&\Elseif\ \TOP(\n)=\top\ \Then\\
&\qquad d_{ik}(\n):=\To(\n) \prl Q_{ik}' \prl b_{ik}:=\false
\end{align*}

Thanks to $b_{ik}$, the administrative program works in two steps. On the first step it increments $\top$ and passes control to $A_j$ which involves a number of details.
A new copy of $A_j$ is being engaged and given number $\Max+1$ where $\Max$ holds the maximal value of $\n$ used so far. Accordingly $\Max$ is incremented. A special dynamic function $\TOP(\n)$ records the value of $\top$ corresponding to the $\n^{th}$ program copy in the consecutive numbering of all copies of all programs $\Pi_i$. Further, the proper input information for $A_j$ is supplied. Finally, using a special unary function $\Ret$ (alluding to ``return address"), $A_j$ is notified that, upon termination, it should return control to the current copy of $A_i$.

The administrative program executes the second step only when (and if) $A_j$ computes the desired output and passes the control and the output to $A_i$. The output is passed by means of a special function $\To$. But how did $A_j$ know who to pass the output to? For this we need to see how $\Pi_j''$ was modified at Stage~3.

\smallskip\noindent
\texttt{Stage~3.} $\Pi_i^+$ is the program
\begin{align*}
&\If\ \neg\Done_i(\n)\ \Then\ \Pi_i''\\
&\Else\ \top:=\top-1 \prl \n:=\Ret(\n)\prl\\
&\qquad \To(\Ret(\n)):=\OUT_i(\n)
\end{align*}

To explain what goes on, let us return to the scenario where $A_1$ executed session $10$, $\top$ was 7, $\Max$ was 20, and $A_1$ passed control to $A_2$ which started session $21$, incremented $\top$ to $8$ and set $\Max$ to $21$. On the same occasion, $A_1$ set $\Ret(21)$ to 10. If and when $A_2$ finishes session 21, it will decrement $\top$ to 7 and will pass control to the program of session 10 which happens to be $A_1$. On the same occasion, $A_2$ will assign its output value to $\To(10)$.

Finally, the program of the desired algorithm $B$ is
\begin{align*}
\If \neg \initialized\ &\quad
  \Then \texttt{Initialize}\\
\Elseif \OUT_1(0)=\nil\ &\quad \Then \texttt{Toil}\\
\Else &\quad \texttt{Finish}
\end{align*}
where \initialized\ is a fresh Boolean variable and the three constituent programs are as follows:

\smallskip\noindent
\texttt{Initialize}\ passes the inputs of $B$ to the $0^{th}$ copy of $A_0$ and sets \initialized\ to $\true$:
\begin{align*}
& \IN_{01}(0):=\IN_1 \prl \dots \prl \IN_{0r}(0):=\IN_r\\
& \textrm{initialized}:=\true
\end{align*}
\noindent
where $r$ is the arity of the objective function of $A_0$.

\smallskip\noindent
\texttt{Toil} is the parallel composition of rules
\[ \If \Active(i)\ \Then \Pi_i^{+} \]
where $i = 0, \dots, {N-1}$.

\smallskip\noindent
\texttt{Finish} just passes the output from the $0^{th}$ copy of $A_0$ to $B$:
\[ \OUT := \OUT_0(0)  \]
This completes the proof of Theorem~\ref{thm:prune}.
\end{proof}

In the obvious way, Theorem~\ref{thm:prune} relativizes in the standard sense of computation theory. More explicitly, the theorem remains true if the algorithms $A_i$ have access to some genuine oracles $O_{i1}, O_{i2},\dots$ and if $B$ has access to all these oracles.

\section{Absolute effectivity, and related work}
\label{discuss}

As we saw above, different algorithms may use different datastructures. But, in logic, one datastructure has been playing the central role historically. This is arithmetic of course. Call an algorithm \emph{arithmetical} if its datastructure is arithmetic.

\begin{definition}
A numerical function $F$ is \emph{effective relative to artihmetic} if it computable by an effective arithmetical algorithm. \textqed
\end{definition}

As we mentioned in the introduction, it is the historical, mathematical Church-Turing thesis that we are discussing here.
The minimum task needed to verify the thesis is to prove the claim that every numerical function effective relative to arithmetic is partial recursive.
In \cite{G188}, this claim has been derived from the axioms of \cite{G141} plus an initial-state axiom asserting that, in the initial states, the dynamic functions --- with the exception of input variables --- are uninformative. The current paper provides additional justification for this axiom.

A more ambitious task would be to show that, for any reasonable datastructure $D$, the numerical functions computed by the $D$ based algorithms are partial recursive. The obvious problem is: Which datastructures are reasonable? This allows us to segue into our quick review of related work.

The idea to define effectivity by expanding the axiomatization of \cite{G141} with an initial-state axiom was enunciated by Udi Boker and Nachum Dershowitz in \cite{BD2006} and elaborated in \cite{BD2008}.
Their Initial Data Axiom asserts that an initial state comprises a Herbrand universe, some finite data, and ``effective oracles" in addition to input.
They shifted focus from single algorithms to computation models.
``The resultant class \dots includes all known Turing-complete state-transition models, operating over any countable domain." Any computation model satisfying the axioms of \cite{G141} plus the Initial Data Axiom is ``of equivalent computational power to, or weaker than, Turing machines."

In \cite{G188}, where the minimum task was accomplished, that task was extended to include any algorithms over a countable domain equipped with an injective mapping from that domain to the natural numbers, subject to some recursivity requirement.

A very different approach was taken by Wolfgang Reisig in \cite{Reisig}. Every first-order structure $S$ of vocabulary $\U$ imposes an equivalence relation on $\U$ terms: $t_1\sim_S t_2\iff\V_S(t_1) = \V_S(t_2)$. ``The successor state $\M(S)$ of a state $S$ is fully specified by the equivalence $\sim_S$\ \dots\ Consequently, $\M(S)$ is computable in case $\sim_S$ is decidable. Furthermore, this result implies a notion of computability for general structures, e.g. for algorithms operating on real numbers."

In \cite{BD2010}, Boker and Dershowitz observe that Reisig's approach gives rise to a novel definition of effectivity, though they restrict attention to the case where every element of any initial state is nameable by a term. They show that the three definitions of equivalence --- in \cite{G188}, in \cite{BD2008}, and implicitly in \cite{Reisig} ---  coincide.

In \cite{DF}, Dershowitz and Falkovich ``describe axiomatizations of several aspects of effectiveness: effectiveness of transitions; effectiveness relative to oracles; and absolute effectiveness, as posited by the Church-Turing Thesis."

This concludes our quick review of related literature.

\end{document}